\newtheorem{theorem}{Theorem}
\newtheorem{lemma}{Lemma}
\theoremstyle{definition}
\newtheorem{definition}{Definition}
\theoremstyle{definition}
\newtheorem{remark}{Remark}
\theoremstyle{definition}
\newtheorem{assumption}{Assumption}
\crefname{assumption}{Assumption}{Assumptions}
\theoremstyle{definition}
\theoremstyle{definition}
\newtheorem{example}{Example}
\newcommand{\X}{\mathcal{X}}
\newcommand{\U}{\mathcal{U}}
\newcommand{\R}{\mathbb{R}}
\newcommand{\C}{\mathcal{C}}
\newcommand{\mb}[1]{\mathbf{ #1 }}
\newcommand{\bs}[1]{\boldsymbol{ #1 }}
\DeclareMathOperator*{\argmin}{argmin}
\newcommand{\lmat}{\begin{bmatrix}}
\newcommand{\rmat}{\end{bmatrix}}
\title{\LARGE \bf
% Recovering Feasibility of Control Barrier Function-based Safety Constraints using Sample-Efficient Online Learning
Safe Online Dynamics Learning with Initially Unknown Models and Infeasible Safety Certificates}
 \author{Alexandre Capone$^1$, Ryan K. Cosner$^2$, Aaron D. Ames$^2$, and Sandra Hirche$^1$
% \thanks{*This work was supported by [...].}% <-this % stops a space
 \thanks{$^{1}$Alexandre Capone and Sandra Hirche are with the Department of Electrical and Computer Engineering, Technical University of Munich, 80333 Munich, Germany. Emails: $\{\texttt{alexandre.capone, hirche}\}$@tum.de).}%
 \thanks{$^{2}$ Ryan K. Cosner and Aaron D. Ames are with the Department of Mechanical and Civil Engineering, California Institute of Technology,
         Pasadena, CA 91125, USA. Emails:
         $\{\texttt{rkcosner, ames}\}$@caltech.edu}}
\begin{document}

\maketitle
\thispagestyle{empty}
\pagestyle{empty}

%%%%%%%%%%%%%%%%%%%%%%%%%%%%%%%%%%%%%%%%%%%%%%%%%%%%%%%%%%%%%%%%%%%%%%%%%%%%%%%%
\begin{abstract}
Control tasks with high levels of uncertainty and safety requirements are increasingly common. Typically, techniques that guarantee safety during learning and control utilize constraint-based safety certificates, which can be leveraged to compute safe control inputs. However, if model uncertainty is very high, the corresponding certificates are potentially invalid, meaning no control input satisfies the constraints imposed by the safety certificate. This paper considers a learning-based setting with a safety certificate based on a control barrier function second-order cone program. If the control barrier function certificate is valid, our approach leverages the control barrier function to guarantee safety. Otherwise, our method explores the system to recover the feasibility of the control barrier function constraint as fast as possible. To this end, we employ a method inspired by well-established tools from Bayesian optimization. We show that if the sampling frequency is high enough, we recover the feasibility of a control barrier function second-order cone program, guaranteeing safety. To the best of our knowledge, this corresponds to the first algorithm that guarantees safety through online learning without requiring a prior model or backup safe non-learning-based controller. %Our method draws from tools typically employed in Bayesian optimization in order to guide exploration of the unknown system model.

\vspace{1em}
% \noindent \textbf{Github: } \href{https://github.com/rkcosner/safe_dynamics_learning.git}{repo}

\end{abstract}

%%%%%%%%%%%%%%%%%%%%%%%%%%%%%%%%%%%%%%%%%%%%%%%%%%%%%%%%%%%%%%%%%%%%%%%%%%%%%%%%

\section{INTRODUCTION}
With the growing proliferation of robotics in safety-critical fields, e.g., autonomous vehicles, medical robotics, and aerospace systems, the need for methods which ensure the safety of systems and their users has become paramount. In control theory, guaranteeing safety has become synonymous with guaranteeing the forward invariance of some user-defined \textit{safe set} \cite{ames_control_2017}. To achieve this, several methods have been developed, including  Model Predictive Control (MPC) with state constraints \cite{wabersich_linear_2018}, Reachability-based methods \cite{bansal2017hamilton}, and Control Barrier Functions (CBFs) \cite{ames_control_2017}, all of which are capable of providing rigorous mathematical guarantees of safety. 

Unfortunately, the safety guarantees of the previously mentioned methods typically rely on an assumed perfect knowledge of the system dynamics, which is not necessarily available in practice. In general, dynamics models have some associated errors that must be accounted for to achieve safety. This error is often modeled using data-driven frameworks, such as neural networks \cite{taylor_learning_nodate,csomay-shanklin_episodic_nodate} or Gaussian processes (GPs) \cite{lederer2022safe,rodriguez2021learning}. In order to still be able to guarantee safety whenever a residual model is involved, accurate model error bounds are generally required and need to be carefully considered in the control design. This has been carried out, e.g., for MPC \cite{hewing2020learning,koller2018learning}, reachability-based techniques \cite{akametalu2015reachability}, and control-barrier functions \cite{castaneda2021pointwise,jagtap2020control}. However, a considerable drawback of the aforementioned techniques is that they all rely on the existence of a (potentially conservative) backup safe controller that they can fall back on in case the system reaches a critical state that the main algorithm cannot address.

In this work, we present an online learning-based algorithm that is able to guarantee feasibility without requiring an a-priori safe controller. Our approach learns the time-derivative of a Control Barrier function online and attempts to solve an associated quadratic program (QP) that takes the model error into account. Whenever the QP becomes infeasible, our approach leverages techniques from Bayesian optimization to ensure that the time-derivative of the Control Barrier Function is learned fast enough in order to recover feasibility of the associated SOCP before the system becomes unsafe. Our method is applicable for a rich class of potential systems and only requires the knowledge of a CBF for the system. To the best of our knowledge, this represents the first approach that allows for safe control without any prior system model or backup safe control law.

The remainder of this paper is structured as follows\footnote{\textbf{Notation:}For scalars $a_1,...a_N$, we use the notation $\text{diag}(a_1,\ldots,a_N)$ to refer to the $N$-dimensional diagonal matrix with $a_1,...a_N$ as diagonal entries. We employ $\vert \cdot \vert$ to denote the determinant operator, and $\lVert\cdot\rVert_2$ to refer to the $2$-norm of both vectors and matrices. For any set $\mathcal{S}$, we employ $\mathcal{P}(\mathcal{S})$ to refer to the power set of $\mathcal{S}$.}. In \Cref{section:problemsetting} we describe the problem setting considered in this paper, together with some background on control barrier functions and Gaussian processes. Our main contribution, which includes the recovery of feasibility of a CBF-SOCP by means of online learning, is presented in \Cref{section:mainresult}. In \Cref{sect:numericalvalidatoin}, we provide a numerical validation of our approach on using a cruise control and a quadcopter model. % and compare it to three different benchmarks, including a state-of-the-art GP-based CBF approach. 
We finalize the paper with some conclusions, in \Cref{section:conclusion}.
% In this work we will focus on Gaussian process regression as a tool for modeling system dynamics due to the associated probabilistic guarantees of accuracy. Despite this desirable ability to guarantee dynamics accuracy, GPs suffer from issues of computational complexity since the evaluation of a GP has a time complexity of $\mathcal{O}(n^3)$  with respect to the number of datapoints. 

% In order to employ the benefits of GPs without accruing the computational cost, we seek to identify the minimum amount of data required to guarantee the safety of a system. In particular, the continuous nature of dynamical systems guarantees fe} in addition to the length scale of the standard square-exponential kernel \cite{Rasmussen2006} indicate the importance of \textit{local} model accuracy in safe controller synthesis. It is the goal of this work to leverage this property of locality to generate safety guarantees using data-efficient GP-based dynamics models. 

% In particular, we will focus on control barrier functions (CBFs) due to their computational tractibility, ability to provide safety guarantees for continuous-time systems, and their ability to synthesize safe control actions online. 

% \begin{figure}
%     \centering\includegraphics[width=0.6\linewidth]{}
%     \caption{Alex: Ideally this figure would show the following: 1) System is in feasible region + no sampling 2) System in infeasible region + sampling 3) Feasibility + Safety is recovered after at most $\Delta N$ samples.}
%     \label{fig:hero_figure}
% \end{figure}

\section{Background and Problem Setting}
\label{section:problemsetting}
 
Consider the control affine system
\begin{align}
    \dot{x} = f(x) + g(x) u  \label{eq:ol_dyn}
\end{align}
where $x \in \mathcal{X} \subseteq \R^n $ and $u \in \mathcal{U} \subseteq \R^m$, and $f: \R^n \to \R^n $ and $g: \R^n \to \R^{n \times m } $ are (partially) unknown locally Lipschitz continuous functions that represent the drift dynamics and the input matrix, respectively. We further assume that $\mathcal{X}$ is an open and connected set and that $\mathcal{U}$ is compact. When a locally Lipschitz controller $\pi: \R^n \to \R^m $ is used, we can define the closed-loop system:
\begin{align}
\dot{x} = f(x) + g(x) \pi(x) \label{eq:cl_dyn}
\end{align}
whose solutions we assume are forward complete (i.e., exist for all $t\geq 0 $). We also assume to know an upper bound $M_{\dot{x}}\in \mathbb{R}_+$ for the time-derivative of the dynamics, as specified by the following assumption. i.e., 
\begin{assumption}
\label{assumption:knownboundontimederivative}
    There exists a known positive constant $M_{\dot{x}}$, such that $\Vert \dot{x}\Vert_2\leq M_{\dot{x}}$ holds for all $ x\in \X$ and $u \in \U$.
\end{assumption}

To define safety we consider a \textit{safe set} $\mathcal{C}$, which we aim to render forward invariant, as specified in the following. 

\begin{definition}[Forward Invariance (Safety)]
    A set $\mathcal{C}\subset \mathcal{X}$ is \textit{forward invariant} for system \eqref{eq:cl_dyn}  if $x(0) \in \mathcal{C}$ implies that $x(t) \in \mathcal{C}$ for all $t \geq 0 $. The closed loop system \eqref{eq:cl_dyn} is said to be \textit{safe} with respect to $\mathcal{C}$ if $\mathcal{C}$ is forward invariant. 
\end{definition}

Our goal is to design a control law $\pi$ that renders the closed loop system \eqref{eq:cl_dyn} safe. To this end, we consider the case where the safe set $\mathcal{C}$ corresponds to the superlevel set of some known continuously differentiable function $h: \R^n \to \R$ with 0 a regular value \footnote{A function $h$ has 0 as a \textbf{\textit{regular value}} if $h(x) = 0  \implies \frac{\partial h}{\partial x}(x) \neq 0 $. }: 
\begin{align*}
\label{eq:specificatoinofC}
    \C & \triangleq \{ x \in \R^n ~|~ h(x) \geq 0 \}. %\\
    % \textrm{Int}(\C ) & \triangleq \{ x \in \R^n ~|~ h(x) >0 \} \\
    % \partial \C & \triangleq \{ x \in \R^n ~|~ h(x) = 0 \}.
\end{align*}
This specification of $\mathcal{C}$
allows us to achieve safety through the control barrier function framework, which we discuss in the following section.

\subsection{Safety through Control Barrier Functions}

Before introducing control barrier functions, we briefly introduce the concepts of class $\mathcal{K}_\infty$ ($\mathcal{K}_{\infty}^e)$ function and barrier functions. 

We call a continuous function $\alpha : \R\to \R$ an extended class $\mathcal{K}_\infty$ ($\mathcal{K}_{\infty}^e)$ if it is strictly monotonically increasing and satisfies $\alpha(0) = 0$ and is radially unbounded. Barrier Functions (BFs) can be used to synthesize controllers ensuring the safety of the closed-loop system \eqref{eq:cl_dyn} with respect to a given set $\mathcal{C}$. 

\begin{definition}[Barrier Function (BF) \cite{ames_control_2017}]
Let $\mathcal{C}\subset\mathcal{X}$ be the 0-superlevel set of a continuously differentiable function $h: \R^n \to \R$ with zero a regular value. The function $h$ is a barrier function (BF) for \eqref{eq:cl_dyn} if there exists some $ \alpha \in \mathcal{K}_{\infty}^{e}$ such that
\begin{align}
    \frac{dh}{dt}(x) = \frac{\partial h}{\partial x} (x) ( f(x) + g(x) \pi(x)) \geq -\alpha(h(x)), \quad \forall x \in \mathcal{X}
\end{align}
\end{definition}

Barrier functions have the following associated safety guarantee: 
\begin{lemma}[{[\citenum{ames_control_2017}]}]
    Let $\mathcal{C}\subset \R^n $ be the 0-superlevel set of a continuously differentiable function $h:\R^n \to \R$ with $\frac{\partial h}{\partial x}(x) \neq 0 $ when $h(x) =0$. If $h$ is a BF for \eqref{eq:cl_dyn} on $\mathcal{C}$, then the system \eqref{eq:cl_dyn} is safe with respect to $\mathcal{C}$. 
\end{lemma}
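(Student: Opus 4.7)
The plan is to show that for any trajectory $x(t)$ of \eqref{eq:cl_dyn} with $x(0)\in\C$, the scalar function $y(t) \triangleq h(x(t))$ remains non-negative for all $t\geq 0$, so that $x(t)\in\C$ for all $t\geq 0$. I would argue via the comparison lemma rather than Nagumo's theorem, since the BF inequality is already formulated as a differential inequality that exactly fits the comparison framework.

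First, I would observe that along any solution of \eqref{eq:cl_dyn}, the BF condition gives
\begin{equation*}
\dot{y}(t) \;=\; \frac{\partial h}{\partial x}(x(t))\bigl(f(x(t)) + g(x(t))\pi(x(t))\bigr) \;\geq\; -\alpha(y(t)).
\end{equation*}
Next, I would introduce the comparison ODE $\dot{z} = -\alpha(z)$ with initial condition $z(0) = y(0) = h(x(0)) \geq 0$. Because $\alpha\in\K_\infty^e$ satisfies $\alpha(0)=0$, the constant function $z\equiv 0$ is a solution; and since $\alpha$ is strictly increasing, $-\alpha$ drives $z$ toward $0$ from above but never past it, so any forward solution with $z(0)\geq 0$ stays in $[0,\infty)$ for all $t\geq 0$. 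Then the comparison lemma yields $y(t)\geq z(t)\geq 0$, i.e., $h(x(t))\geq 0$, which is the desired forward invariance of $\C$.

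The main obstacle is applying the comparison lemma rigorously when $\alpha$ is only assumed continuous and strictly monotone (not Lipschitz). The comparison lemma is usually stated under local Lipschitzness of the right-hand side to ensure uniqueness. I would handle this by noting that uniqueness is only needed for the lower-bounding solution $z(t)$ initialized at $z(0)\geq 0$; for any such initial condition, monotonicity and $\alpha(0)=0$ guarantee that every Carathéodory solution of $\dot z = -\alpha(z)$ remains non-negative, which is all that is required for the inequality $y(t)\geq z(t)\geq 0$. If preferred, one can instead invoke the boundary argument directly: whenever $h(x)=0$ the BF condition forces $\tfrac{d h}{dt}\geq -\alpha(0) = 0$, so trajectories cannot cross $\partial\C$ outward, which is the Nagumo-style conclusion for forward invariance. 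Finally, since $\C\subset\X$ and solutions are assumed forward complete, trajectories initiated in $\C$ never leave $\C$, completing the proof.
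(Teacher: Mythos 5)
The paper itself gives no proof of this lemma---it is imported wholesale from the cited reference---so the benchmark is the proof in \cite{ames_control_2017}, which is precisely the comparison-lemma argument you give; your route matches the source and is correct. Two remarks. First, the non-Lipschitz worry you raise resolves even more cleanly than your workaround suggests: $-\alpha$ is strictly \emph{decreasing}, and a non-increasing right-hand side gives forward uniqueness of the comparison ODE for free, since for two solutions $z_1,z_2$ from the same initial condition one has $\frac{d}{dt}(z_1-z_2)^2 = 2(z_1-z_2)\left(\alpha(z_2)-\alpha(z_1)\right)\leq 0$; hence the classical comparison lemma applies directly, and your observation that any solution of $\dot z=-\alpha(z)$ with $z(0)\geq 0$ stays non-negative (because $\dot z=-\alpha(z)>0$ whenever $z<0$) finishes the argument. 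Second, the ``Nagumo-style'' fallback as you phrase it---``whenever $h(x)=0$ the condition forces $\dot h\geq 0$, so trajectories cannot cross outward''---is not rigorous on its own: the scalar function $y(t)=-t^2$ satisfies $\dot y=0$ at the one instant where $y=0$ yet immediately becomes negative, so a boundary-only derivative condition does not by itself preclude exit. Making that route precise requires invoking Nagumo's theorem properly, using the regular-value hypothesis to turn $\dot h\geq 0$ on $\partial\mathcal{C}$ into subtangentiality of the closed-loop vector field, together with uniqueness of solutions of \eqref{eq:cl_dyn}. Keep the comparison argument as the primary proof and either drop the fallback or state it with those caveats.
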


Furthermore, control barrier functions (CBFs) serve a similar role for system \eqref{eq:ol_dyn} without a pre-defined controller and can be used to generate safe control inputs. 

\begin{definition}[Control Barrier Function (CBF) \cite{ames_control_2017}]
    Let $\mathcal{C} \subset\mathcal{X} $ be the 0-superlevel set of a continuously differentiable function $h: \R^n \to \R$  with 0 a regular value. The function $h$ is a control barrier function (CBF) for \eqref{eq:ol_dyn} on $\mathcal{C}$ if there exists an $\alpha \in \mathcal{K}_\infty^e$ such that for all $x \in \mathcal{X}$: 
    \begin{align}
    \label{eq:cbf_constraint}
    \begin{split}
        \sup_{u \in \mathcal{U}} \dot{h}(x, u)  \triangleq  \sup_{u \in \mathcal{U}} \frac{\partial h }{\partial x }(x)\left( f(x) + g(x)u \right)> -\alpha(h(x)) .
        \end{split}
    \end{align}
    % \begin{align}
    %     \sup_{u \in \R^m } \dot{h}(x, u) \triangleq \sup_{u \in \R^m } L_fh(x) + L_gh(x) u > -\alpha(h(x)) \label{eq:cbf_constraint}
    % \end{align}
    % where $L_fh(x)$ and $L_gh(x)$ are the Lie derivatives of $h$ along $f$ and $g$ respectively. 
\end{definition}

In this paper, we consider Lipschitz continuous functions $h$ which yield compact safe sets, as stated in the following assumption.

\begin{assumption}
\label{assumption:Lipschitzh}
    The safe set $\mathcal{C}$ is compact and the function $h$ admits a Lipschitz constant $L_h$, i.e., $\vert h(x) - h(x') \vert \leq L_h \Vert x-x'\Vert_2$ holds for all $x, x' \in \mathcal{X}$. 
\end{assumption}

Given a CBF $h$ and corresponding $\alpha \in \mathcal{K}_\infty^e$, there exist a corresponding non-empty set of safe inputs for each $x \in \mathcal{X} $ given as $\Pi_{\textrm{CBF}}: \mathcal{X} \to \mathcal{P}(\mathcal{U})$: 
\begin{align}
    \Pi_{\textrm{CBF}}(x) \triangleq \left \{ u \in \mathcal{U} ~\Big | ~ \dot{h}(x, u) \geq - \alpha (h(x)) \right\}
\end{align}
\noindent which can be used to synthesize controllers which guarantee safety. 
\begin{lemma}
[\cite{ames_control_2017}]
\label{thm:standard_cbf}
    Let $\mathcal{C}\subset \mathcal{X} $ be a 0-superlevel set of a continuously differentiable function $h: \R^n \to \R$ with 0 a regular value. If $h$ is a CBF for \eqref{eq:ol_dyn}, then the set $\Pi_{\textrm{CBF}}(x)$ is non-empty for all $x \in \mathcal{X}$ and additionally if $\pi:\mathcal{X} \to \mathcal{U} $ is a locally Lipschitz controller with $\pi(x) \in \Pi_\textrm{CBF}(x)$ for all $x \in \mathcal{X} $, the closed loop system \eqref{eq:cl_dyn} is safe with respect to $\mathcal{C}$. 
\end{lemma}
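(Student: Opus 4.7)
My plan is to prove the two claims of the lemma separately: first that $\Pi_{\textrm{CBF}}(x)$ is non-empty at every $x$, and second that any locally Lipschitz selection $\pi(x) \in \Pi_{\textrm{CBF}}(x)$ renders $\mathcal{C}$ forward invariant for the closed-loop system \eqref{eq:cl_dyn}.

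The non-emptiness claim is immediate from the CBF definition. Fix $x \in \mathcal{X}$ and observe that the map $u \mapsto \dot{h}(x,u) = \frac{\partial h}{\partial x}(x)(f(x)+g(x)u)$ is affine and hence continuous in $u$. Since $\mathcal{U}$ is compact, the supremum in \eqref{eq:cbf_constraint} is attained at some $u^{\star} \in \mathcal{U}$, and the strict CBF inequality yields $\dot{h}(x,u^{\star}) > -\alpha(h(x))$, so in particular $u^{\star} \in \Pi_{\textrm{CBF}}(x)$.

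For the safety claim, fix a locally Lipschitz $\pi$ with $\pi(x) \in \Pi_{\textrm{CBF}}(x)$ for all $x$. Local Lipschitz continuity of $f,g,\pi$ makes the closed-loop vector field locally Lipschitz, and together with the standing forward-completeness hypothesis this yields a unique trajectory $x(t)$ for all $t \geq 0$ starting from any $x(0) \in \mathcal{C}$. Setting $\phi(t) \triangleq h(x(t))$, the chain rule and the membership $\pi(x(t)) \in \Pi_{\textrm{CBF}}(x(t))$ give the scalar differential inequality $\dot{\phi}(t) \geq -\alpha(\phi(t))$ for all $t \geq 0$. Since $\phi(0) \geq 0$, I would conclude $\phi(t) \geq 0$ for all $t \geq 0$ by comparing $\phi$ against the scalar ODE $\dot{\psi} = -\alpha(\psi)$ with $\psi(0) = \phi(0)$: because $\alpha(0) = 0$, the constant zero function is an equilibrium of this ODE, so any solution starting at a non-negative initial value remains non-negative, and the comparison lemma lifts that bound to $\phi$.

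The main subtlety is in invoking the comparison lemma, since $\alpha \in \mathcal{K}_\infty^e$ is only assumed continuous and uniqueness of solutions to the comparison ODE is therefore not automatic. I would sidestep this issue with a direct boundary argument: suppose for contradiction that $t^{\star} \triangleq \inf\{t \geq 0 : \phi(t) < 0\}$ is finite; continuity of $\phi$ forces $\phi(t^{\star}) = 0$, and the differential inequality at $t^{\star}$ gives $\dot{\phi}(t^{\star}) \geq -\alpha(0) = 0$. Combined with a Dini-derivative version of Nagumo's tangency condition, this nonnegative right-derivative at the boundary prevents $\phi$ from being driven strictly below zero immediately after $t^{\star}$, producing the desired contradiction. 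Hence $h(x(t)) \geq 0$ for all $t \geq 0$, so $\mathcal{C}$ is forward invariant and the closed-loop system is safe with respect to $\mathcal{C}$.
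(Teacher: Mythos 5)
The paper does not actually prove this lemma---it is imported verbatim from \cite{ames_control_2017}---so there is no in-paper argument to compare against; I am judging your proof on its own. Your overall structure is the standard one from that reference: non-emptiness follows from the strict CBF inequality (your compactness remark is fine, though even without attainment of the supremum the strict inequality already yields some $u$ with $\dot{h}(x,u) \geq -\alpha(h(x))$), and safety is reduced to the scalar differential inequality $\dot{\phi}(t) \geq -\alpha(\phi(t))$ for $\phi(t) = h(x(t))$. You also correctly identify the genuine subtlety, namely that $\alpha$ is only assumed continuous, so uniqueness of solutions to the comparison ODE is not automatic.

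However, the step you use to close the argument does not work as stated. Knowing $\phi(t^{\star}) = 0$ and $\dot{\phi}(t^{\star}) \geq 0$ at the single instant $t^{\star}$ does not prevent $\phi$ from becoming negative immediately afterwards: $\phi(t) = -(t-t^{\star})^2$ has vanishing derivative at $t^{\star}$ yet is negative for all $t > t^{\star}$. A sign condition on the derivative at one point is never sufficient on its own, and Nagumo-type theorems cannot be reduced to this one-point observation---they require the tangency condition on the entire boundary and a nontrivial proof. The repair is short and uses the differential inequality on an interval rather than at a point: if $\phi(t_1) < 0$ for some $t_1 > 0$, set $s = \sup\{t \in [0,t_1] : \phi(t) \geq 0\}$, so that $\phi(s) = 0$ and $\phi(\tau) < 0$ for all $\tau \in (s, t_1]$. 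On that interval $\alpha(\phi(\tau)) < 0$ because $\alpha$ is strictly increasing with $\alpha(0) = 0$, hence $\dot{\phi}(\tau) \geq -\alpha(\phi(\tau)) > 0$, and integrating from $s$ to $t_1$ gives $\phi(t_1) \geq \phi(s) = 0$, a contradiction. With that substitution your proof is complete; note that this scalar argument does not even need the regular-value hypothesis, which only becomes necessary if you instead run Nagumo's theorem on the set $\mathcal{C}$ in state space.
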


One common method for synthesizing controllers which satisfy $\pi(x) \in \Pi_\textrm{CBF}(x) $ for all $x \in \mathcal{X} $ is by employing the CBF condition \eqref{eq:cbf_constraint} as a constraint in a safety-filter which enforces safety while achieving minimal deviation from a nominal controller $\pi_\textrm{nom}: \mathcal{X} \to \mathcal{U}$: 
\begin{align}
    \pi_{\textrm{CBF-QP}}(x) = \argmin_{u \in \mathcal{U}}  & \Vert u - \pi_\textrm{nom}(x) \tag{CBF-QP} \label{eq:cbf-qp}\Vert_2 \\
    \textrm{s.t. } & \frac{\partial h}{\partial x}(x)\left( f(x) + g(x) u \right) \geq - \alpha(h(x)). \nonumber
\end{align}
\noindent Given the affine form of this constraint, this safety filter is a quadratic program (QP) which has a closed form solution \cite{ames_control_2017} and can be solved rapidly enough for online safe controller synthesis \cite{gurriet2018towards} for robotic systems. 

% In order for the solutions of the closed-loop system to be unique, we henceforth consider nominal controllers that satisfy a local Lipschitz condition, as specified by the following assumption.
% \begin{assumption}
% \label{assumption:nomctrlislipschitz}
%     The nominal controller $\pi_\textrm{nom}$ is locally Lipschitz in $\X$.
% \end{assumption}

% In order to be able to learn the unknown model in a way that guarantees feasibility of a CBF-SOCP after a finite number of samples, we assume robustness of the standard safety configuration: 
Since we seek to consider systems with unknown models, we assume an additional degree of robustness to the CBF constraint which will allow us to find ``safer'' control actions to compensate for this uncertainty:
\begin{assumption}[Robust CBF Feasibility] \label{assp:robust_cbf_feasibility}
There exists some $ \alpha \in \mathcal{K}_{\infty}^{e}$ and a positive scalar $\epsilon >0$, such that
    \begin{align}
        \sup_{u \in \mathcal{U}} \frac{\partial h}{\partial x}(x)\left( f(x) + g(x) u \right)\geq - \alpha (h(x)) + \epsilon \label{eq:robust_cbf}
    \end{align}
    holds for all $x \in \mathcal{X}$.
\end{assumption}

% \begin{remark}[Feasibility Assumption]
%     We assume feasibility of the original input-contrained problem. Lack of feasibility of this problem would imply that the exact system has no margin of safety. As an example, accounting for this robustness to a CBF representing signed distance function with some radius $r>0$ a constant $\alpha>0$ would be: 
%     \begin{align}
%         h(x) &= \textrm{sdf}(x)  - r \\
%         h'(x) & = \textrm{sdf}(x) - r + \frac{\epsilon}{\alpha}
%     \end{align}
%     Feasibility of the CBF condition for $h$ would imply feasibility of the robust CBF condition for $h'$ since they are equivalent inequalities. The set $\mathcal{C}' \subset \mathcal{C}$. 
% \end{remark}

We note that given the existence of a CBF, this assumption is not highly conservative. This can be be seen through the lens of Input-to-State Safety, as shown in the following example.

\begin{example}%[System Satisfying Robust Feasibility Assumption]
    If the CBF condition \eqref{eq:cbf_constraint} holds on $\mathcal{X}$ with $\alpha(r) = ar$ for some $a > 0$, %then the system is Input-to-State-Safe meaning that the set $\mathcal{C}_\epsilon\triangleq \{x \in \mathcal{X} \ \vert \ h(x) \geq - \gamma(\delta)\}  $ is safe for all $\delta >0$ such that $\mathcal{C}_\delta \subset\mathcal{X}$. Additionally, if $\alpha>0$ is a constant,
    then the Assumption \ref{assp:robust_cbf_feasibility} holds for any function $h'(x) \triangleq h(x) + \frac{\epsilon}{a} $ with $\epsilon>0$ such that $\mathcal{C}_\epsilon = \{ x \in \R^n ~|~ h'(x) \geq 0 \} \subset \mathcal{X}$. This can be seen by taking
    \begin{align*}
        \sup_{u \in \mathcal{U}} \dot{ h}'(x, u) &=  \sup_{u \in \mathcal{U}}\dot{ h } (x, u) \\
        & = \sup_{u \in \mathcal{U}} \frac{\partial h }{\partial x } (x)\left( f(x) + g(x) u \right) \\
        & \geq  - a h(x) +  \epsilon - \epsilon \\
        & =  - a h'(x)   + \epsilon
    \end{align*}
    Thus we recover \eqref{eq:robust_cbf} for $h'$. We note that $ \mathcal{C} \subset \mathcal{C}_\epsilon \subset \mathcal{X}$, so to attain this robust CBF feasibility, the set $\mathcal{C}$ should be chosen with some allowable margin such that, $h(x) \in [-\frac{\epsilon }{a }, 0] $ does not imply catastrophic failure. 
    \end{example}

\subsection{Gaussian Processes and Reproducing Kernel Hilbert Spaces}

Consider a prior model of the system $\hat{f}$ and $\hat{g}$, which can be set to zero whenever no prior knowledge is available. We then model the residual $f-\hat{f}$ and $g-\hat{g}$ of the unknown functions $f$ and $g$ jointly by using a Gaussian process (GP) model. In order to then obtain theoretical guarantees on the growth of the model error as new data is added, we assume that the residual model belongs to a reproducing kernel Hilbert space (RKHS). This will be codified later in this section. In the following, we review both GPs and RKHSs, and provide some preliminary theoretical results.

\begin{definition}[Gaussian Process (GP) \cite{Rasmussen2006}]
    A Gaussian Process (GP) is a collection of random variables, any finite number of which have (consistent) joint Gaussian distributions. 
\end{definition}
\noindent GPs are fully specified by a prior mean, which we set to zero without loss of generality, and kernel $k: \mathcal{X} \times \mathcal{X} \rightarrow \mathbb{R}$. Though GPs are typically defined for scalar functions, a GP model for the vector-valued functions $f$ and $g$ can easily be obtained by defining a separate GP prior for each dimension. We then employ composite kernels
\begin{align}
\label{eq:compositekernel}
    \begin{split}
        k_i(\mb{z},\mb{z}') \triangleq k_{f_i}(x,x') + \sum_{j=1}^m u_j k_{g_{(i,j)}}(x,x')u_j',
    \end{split}
\end{align}
where $\mb{z} \triangleq (x^\top,u^\top)^\top $, to model the entries of $f(x)+g(x)u$ jointly. Here $k_{f_i}$ and $k_{g_{(i,j)}}$ are kernels that capture the behavior of the individual entries of $f$ and $g$, respectively. In practice, if little is known about the system, then so-called ``universal kernels''  (e.g. Matérn or squared-exponential kernel) are employed, which can approximate continuous functions arbitrarily accurately \cite{micchelli2006universal}. In this paper, we consider the practically relevant squared-exponential kernel
\begin{align}
    k_{f_i}(\mb{z},\mb{z}')&= \sigma_{f_i}^2 \exp\left(-\frac{\Vert \mb{z}-\mb{z}'\Vert_2^2}{2 l_{f_i}^2} \right), \\
    k_{g_{(i,j)}}(\mb{z},\mb{z}')&= \sigma_{g_{(i,j)}}^2 \exp\left(-\frac{\Vert \mb{z}-\mb{z}'\Vert_2^2}{2 l_{g_{(i,j)}}^2} \right),
\end{align}
although our approach also is applicable with other kernels. The reason why we employ a composite kernel \eqref{eq:compositekernel} to model the entries of $f$ and $g$ jointly is because this allows us to leverage measurements of $\dot{x}$ to improve our learned model. %Note that, since $\mathcal{U}$ and $\mathcal{C}$ are compact, the kernel $k_i$ is Lipschitz continuous with respect to $x$ and $u$ in $\mathcal{U}$ and $\mathcal{C}$. We denote the corresponding Lipschitz constant by $L_i$.

Consider $N$ noisy measurements $\mathcal{D}_{i,N} \triangleq \{ \mb{z}^{(q)}, y_i^{(q)}\}_{q=1,...,N}$ of the i-th entry of the time-derivative of the state, where \begin{align}
\begin{split}
y_i^{(q)} = &f_i(x^{(q)}) - \hat{f}_i(x^{(q)}) \\
&+ \sum_{j=1}^m \left(g_{i,j}(x^{(q)})-\hat{g}_{i,j}(x^{(q)})\right)u_j^{(q)} + \xi^{(q)}_i
\end{split}
\end{align} and the measurement noise $\xi^{(q)}$ satisfies the following assumption. 

\begin{assumption}
    \label{assumption:noise} 
    The measurement noise $\xi^{(q)}$ is iid zero-mean Gaussian noise with covariance $\sigma_{i,\text{ns}}^2$ for every $i\in\{1,...,n\}$.
\end{assumption}

The posterior of the model at an arbitrary state $\mb{z}^*$ is then given by
    \begin{align*}
    \begin{split}
        & \mu_{i,N}(\mb{z}^*) \triangleq \\
        % \mathbb{E} \Big[f_i(x^*) + g_{i,j}(x^*)u_i^* \ \Big\vert \mathcal{D}_{i,N} \Big] \\&= 
        & \qquad \hat{f}_i(x^*)  + \sum_{j=1}^m \hat{g}_{i,j}(x^*)u_j^*  +\mb{k}_{i,*}^\top\left(\mb{K}_{i,N} + \sigma_{i,\text{ns}}^2 \mb{I}_N\right)^{-1}\mb{y}_i ,\\ 
        & \sigma^2_{i,N}(\mb{z}^*) \triangleq 
        % \triangleq \mathbb{V} \Big[f_i(x^*) + g_{i,j}(x^*)u_i^*  \ \Big\vert \mathcal{D}_{i,N} \Big] \\ &= 
        k_i(\mb{z}^*,\mb{z}^*) - \mb{k}_{i,*}^\top\left(\mb{K}_{i,N}  + \sigma_{i,\text{ns}}^2 \mb{I}_N\right)^{-1}\mb{k}_{i,*},
        \end{split}
    \end{align*}
where $\mb{k}_{i,*} \triangleq (k_i(\mb{z}^*,\mb{z}^{(1)}),...,k_i(\mb{z}^*,\mb{z}^{(N)}))^\top$, and the entries of the covariance matrix are given by $[\mb{K}_{i,N}]_{pq} = k_i(\mb{z}^{(p)},\mb{z}^{(q)})$. In our approach, we employ $\mu_{i,N}$ as a model of our system, and $\sigma_{i,N}$ as a measure of uncertainty, which we employ to inform the data-collection trigger.

% \begin{remark}
%     The predictive system model $\mu_{i,N}(\mb{z}^*) $ is linear in $u^*$, whereas $\sigma_{i,N}(\mb{z}^*) $ is quadratic in $u^*$.
% \end{remark}

In the following, we refer to the mean and covariance matrix of the full multivariate Gaussian process model as
\begin{align}
\label{eq:multivariategp}
    \begin{split}
        \bs{\mu}_N(\cdot) &\triangleq (\mu_{{1,N}}(\cdot), \ldots, \mu_{{n,N}}(\cdot))^\top, \\ \bs{\Sigma}_N^2(\cdot)&\triangleq \text{diag}(\sigma^2_{1,N}(\cdot),\ldots,\sigma^2_{n,N}(\cdot)).
    \end{split}
\end{align}
Furthermore, we employ $\mb{z}=(x^\top,u^\top)^\top$ and $\mathcal{D}_N =\{ \mb{z}^{(q)}, \mb{y}^{(q)}\}_{q=1,...,N}$, where $\mb{y}^{(q)}\triangleq (y_1^{(q)},...,y_n^{(q)})^\top$, to refer to concatenated state and input, and the full data set corresponding to $N$ system measurements, respectively.

In the following, we assume that the entries $f_i(x) - \hat{f}_i(x) + \sum_{j=1}^m (g_{i,j}(x) - \hat{g}_{i,j}(x))u_j$ of the residual system dynamics belong to the reproducing kernel Hilbert space generated by the composite kernel \eqref{eq:compositekernel}.
\begin{assumption}
\label{assumption:finiteRKHSnorm}
    For every $i=1,..,n$, the function $f_i(x) - \hat{f}_i(x) + \sum_{j=1}^m (g_{i,j}(x) - \hat{g}_{i,j}(x))u_j$ belongs to the RKHS with reproducing kernel $k_i$, and the corresponding RKHS norm is bounded by a known positive scalar $B_{i} \in (0,  \infty)$.
\end{assumption}

Assumption \ref{assumption:finiteRKHSnorm} implies that the entries of the system dynamics can be expressed as a weighted sum of evaluations of the composite kernel \eqref{eq:compositekernel}. This is not very restrictive, as the corresponding function space is considerably rich.

We can then leverage Assumption \ref{assumption:finiteRKHSnorm} to obtain a bound on the model error.
\begin{lemma}[{[\citenum{Capone2019BacksteppingFP}, Lemma 1]}]
\label{lemma:chowdhury}
Let Assumptions \ref{assumption:noise} and \ref{assumption:finiteRKHSnorm} hold. Then, for any $i=1,...,n$, with probability at least $1-\delta$,
\begin{align}
&\bigg \vert f_i(x^*) + \sum_{j=1}^m g(x^*)u_j^* - \mu_{i,N}(\mb{z}^*) \bigg\vert \leq \beta_{i,N}
\sigma_{i,N}(\mb{z}^*) \label{eq:gp_err_bound}
\end{align}
holds for all $\mb{z}\in \X\times\U$ and all $N\in\mathbb{N}$, where
\begin{align}
    \beta_{i,N} \triangleq B_i + \sigma_{i,\text{ns}}\sqrt{2\left(\gamma_{i,N} + 1 + \ln{(n\delta^{-1})}\right) }
\end{align}
and
    \begin{align}
        \gamma_{i, N}\triangleq \max_{\mb{z}^{(1)},...,\mb{z}^{(N)}} \frac{1}{2}\left\vert \mb{I}_N + \sigma_{i,\text{ns}}^{-1}\mb{K}_{i,N} \right\vert
    \end{align}
    corresponds to the maximal information gain after $N$ rounds of data collection.
\end{lemma}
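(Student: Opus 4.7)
The plan is to reduce this multivariate statement to $n$ applications of the standard scalar RKHS confidence bound of Chowdhury and Gopalan, one per coordinate, and then combine them with a union bound. The core observation that makes this reduction work is the structure of the composite kernel \eqref{eq:compositekernel}: for fixed coordinate $i$, the map $\mb{z} = (x^\top,u^\top)^\top \mapsto f_i(x) - \hat f_i(x) + \sum_{j=1}^m (g_{i,j}(x) - \hat g_{i,j}(x))u_j$ is a scalar-valued function of $\mb{z}$, and by \Cref{assumption:finiteRKHSnorm} it lies in the RKHS induced by $k_i$ with norm bounded by $B_i$. The observations $y_i^{(q)}$ are exactly noisy evaluations of this scalar function at $\mb{z}^{(q)}$, and by \Cref{assumption:noise} the noise is i.i.d.\ zero-mean Gaussian with variance $\sigma_{i,\text{ns}}^2$. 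Thus the GP posterior $(\mu_{i,N},\sigma_{i,N})$ conditioned on $\mathcal{D}_{i,N}$ is computed in precisely the setting of Chowdhury--Gopalan.

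First, I would state the scalar concentration result (Chowdhury--Gopalan, Theorem~2): for any fixed $i$ and any $\delta_i \in (0,1)$, with probability at least $1-\delta_i$, uniformly in $N \in \mathbb{N}$ and $\mb{z}^*\in \X\times\U$,
\begin{align*}
\left| f_i(x^*) + \textstyle\sum_{j=1}^m g_{i,j}(x^*)u_j^* - \mu_{i,N}(\mb{z}^*)\right| \leq \tilde\beta_{i,N}(\delta_i)\,\sigma_{i,N}(\mb{z}^*),
\end{align*}
where $\tilde\beta_{i,N}(\delta_i) = B_i + \sigma_{i,\text{ns}}\sqrt{2(\gamma_{i,N}+1+\ln(1/\delta_i))}$, and $\gamma_{i,N}$ is the stated maximum information gain for the kernel $k_i$. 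This is the standard RKHS confidence result; it hinges on martingale self-normalized concentration for the noise and on the fact that the RKHS norm of the target is bounded by $B_i$.

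Next, I would apply this scalar bound once for each $i\in\{1,\dots,n\}$ with the confidence parameter $\delta_i = \delta/n$. Substituting this choice gives exactly
\begin{align*}
\tilde\beta_{i,N}(\delta/n) = B_i + \sigma_{i,\text{ns}}\sqrt{2(\gamma_{i,N}+1+\ln(n\delta^{-1}))} = \beta_{i,N},
\end{align*}
matching the constant in the statement. A union bound over $i=1,\dots,n$ then yields that the displayed inequality \eqref{eq:gp_err_bound} holds simultaneously for all coordinates with probability at least $1-\delta$, which in particular gives the per-coordinate statement of the lemma.

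The main (really, the only) non-routine step is verifying that the composite kernel construction indeed produces a posterior mean and variance for which the scalar Chowdhury--Gopalan guarantee is directly applicable; once this identification is made, the remainder is a citation of the scalar bound plus a union bound. Since \Cref{assumption:finiteRKHSnorm} is posed exactly so that the target function (viewed as a function of $\mb{z}$) lies in the RKHS of $k_i$ with norm $\leq B_i$, and since the GP posterior formulas $(\mu_{i,N},\sigma_{i,N}^2)$ use this same $k_i$, this identification is immediate, and the lemma follows.
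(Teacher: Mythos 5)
The paper offers no proof of this lemma, instead citing it from the reference [Capone--Hirche, Lemma~1], whose underlying argument is precisely the one you give: a coordinate-wise application of the scalar Chowdhury--Gopalan RKHS confidence bound (valid here because the composite kernel $k_i$ makes each residual a scalar RKHS element of norm at most $B_i$ observed under i.i.d.\ Gaussian noise), with confidence parameter $\delta/n$ per coordinate and a union bound, which is exactly what the $\ln(n\delta^{-1})$ term in $\beta_{i,N}$ reflects. Your proposal is correct and matches the intended derivation.
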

% \begin{proof}
%     The proof follows directly from [\citenum{chowdhury2017kernelized}, Theorem 2] by applying the union bound with respect to the indeces $i \in \{ 1, \dots, n\}$, similarly to [\citenum{Capone2019BacksteppingFP}, Lemma 1].%\textcolor{blue}{ with respect to the indeces $i \in \{ 1, \dots, n\}$, i.e. $\mathbb{P}\{ $ \eqref{eq:gp_err_bound} for all $ i \in \{0, \dots n\} \} \leq \sum_{i=1}^n \mathbb{P}\{ (19) \textrm{ for i} \} \leq n(1 - \delta_\textrm{[16]}) $}.
% \end{proof}

While Lemma \ref{lemma:chowdhury} allows us to bound the GP model error at any given point $\mb{z}^*$ given the data with high probability, to recover feasibility we need also to understand how $\gamma_{i,N}$ and $
\sigma^2_{i,N}$ change as we add more data. To achieve this, we will also employ the following results.

\begin{lemma}[{[\citenum{Srinivas2012}, Lemma 5.4]}]
\label{lemma:srinivasregretbound}
    %Let Assumption \ref{assumption:finiteRKHSnorm} hold. 
    The posterior variance satisfies% and let $\beta_{i,N}$ be defined as in Lemma \ref{lemma:chowdhury}. Then
    \begin{align}
        \sum_{q=1}^N
\sigma^2_{i,q}\left(\mb{z}^{(q-1)}\right) \leq \frac{2}{\ln{\left(1 + \sigma_{i,\text{ns}}^{-2}\right)}} \gamma_{i,N} .
%     \sum_{q=1}^N\beta_{i,q}^2
% \sigma^2_{i,q}\left(\mb{z}^{(q-1)}\right) \leq \frac{2}{\ln{\left(1 + \sigma_{i,\text{ns}}^{-2}\right)}} \beta_{i,N}^2\gamma_{i,N} .
    \end{align}
\end{lemma}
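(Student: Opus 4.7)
The plan is to relate the sum of posterior variances to a log-determinant quantity (which is essentially the information gain) through two standard tools: a telescoping identity coming from the chain rule of Gaussian entropies, and a real-variable inequality that upgrades a logarithm into a linear lower bound.

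First I would recall the definition $\gamma_{i,N} = \tfrac{1}{2}\max_{\mb{z}^{(1)},\ldots,\mb{z}^{(N)}} \ln\lvert \mb{I}_N + \sigma_{i,\text{ns}}^{-2} \mb{K}_{i,N}\rvert$. For the specific sequence of query points $\mb{z}^{(1)},\ldots,\mb{z}^{(N)}$ appearing in the sum, write $\tilde{\gamma}_{i,N} \triangleq \tfrac{1}{2}\ln\lvert \mb{I}_N + \sigma_{i,\text{ns}}^{-2} \mb{K}_{i,N}\rvert$ for that particular selection, so that $\tilde{\gamma}_{i,N} \leq \gamma_{i,N}$ by the maximality in the definition. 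The key identity is the telescoping decomposition
\begin{align*}
\tilde{\gamma}_{i,N} \;=\; \sum_{q=1}^{N} \tfrac{1}{2}\ln\!\left(1 + \sigma_{i,\text{ns}}^{-2}\,\sigma_{i,q-1}^{2}(\mb{z}^{(q)})\right),
\end{align*}
which follows from the Gaussian chain rule: the conditional entropy of the $q$-th observation given the previous $q-1$ observations depends only on the predictive variance $\sigma_{i,q-1}^{2}(\mb{z}^{(q)})$ plus the noise.

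Next I would apply the elementary inequality that, for any $a\geq 0$ and $x \in [0,a]$, one has $x \leq \tfrac{a}{\ln(1+a)} \ln(1+x)$, since $\ln(1+x)/x$ is monotonically nonincreasing on $(0,\infty)$. Under the standing squared-exponential normalization (so that $\sigma_{i,q-1}^{2}(\mb{z}^{(q)}) \leq k_i(\mb{z}^{(q)},\mb{z}^{(q)}) \leq 1$), I apply this with $x = \sigma_{i,\text{ns}}^{-2}\,\sigma_{i,q-1}^{2}(\mb{z}^{(q)})$ and $a = \sigma_{i,\text{ns}}^{-2}$, which after dividing by $\sigma_{i,\text{ns}}^{-2}$ yields the pointwise bound
\begin{align*}
\sigma_{i,q-1}^{2}(\mb{z}^{(q)}) \;\leq\; \frac{1}{\ln(1 + \sigma_{i,\text{ns}}^{-2})}\,\ln\!\left(1 + \sigma_{i,\text{ns}}^{-2}\,\sigma_{i,q-1}^{2}(\mb{z}^{(q)})\right).
\end{align*}

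Finally, summing this inequality over $q=1,\ldots,N$ and invoking the telescoping identity above converts the right-hand side into $2\tilde{\gamma}_{i,N}/\ln(1+\sigma_{i,\text{ns}}^{-2})$, which is bounded by $2\gamma_{i,N}/\ln(1+\sigma_{i,\text{ns}}^{-2})$ by definition of $\gamma_{i,N}$. The main obstacle is verifying the telescoping identity rigorously, since it relies on reinterpreting the log-determinant as a cumulative mutual information via the chain rule and then recognizing each increment as $\tfrac{1}{2}\ln(1 + \sigma_{i,\text{ns}}^{-2}\sigma_{i,q-1}^2(\mb{z}^{(q)}))$; this is essentially a Schur-complement/block-matrix determinant computation applied inductively to $\mb{I}_q + \sigma_{i,\text{ns}}^{-2}\mb{K}_{i,q}$. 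The rest of the argument is then routine analysis of a scalar inequality.
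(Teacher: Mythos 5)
The paper does not prove this lemma at all---it is imported verbatim from Srinivas et al.\ (their Lemma~5.4), so there is no in-paper proof to compare against. Your argument is exactly the standard proof of that cited result: the telescoping identity $\tfrac{1}{2}\ln\lvert \mb{I}_N+\sigma_{i,\text{ns}}^{-2}\mb{K}_{i,N}\rvert=\sum_{q=1}^{N}\tfrac{1}{2}\ln\bigl(1+\sigma_{i,\text{ns}}^{-2}\sigma^2_{i,q-1}(\mb{z}^{(q)})\bigr)$ (the Gaussian chain rule / iterated Schur-complement computation, Lemma~5.3 in the source), combined with the scalar bound $x\leq \frac{a}{\ln(1+a)}\ln(1+x)$ for $x\in[0,a]$, which follows from the monotone decrease of $\ln(1+x)/x$. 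Both steps are sound, and you correctly use the standard indexing $\sigma^2_{i,q-1}(\mb{z}^{(q)})$ (the paper's statement has the indices transposed, and its displayed definition of $\gamma_{i,N}$ drops the logarithm---both apparent typos that your reconstruction silently repairs). The one point to flag explicitly is the normalization $k_i(\mb{z},\mb{z})\leq 1$: it is genuinely needed to take $a=\sigma_{i,\text{ns}}^{-2}$ and obtain the constant $2/\ln(1+\sigma_{i,\text{ns}}^{-2})$ exactly as stated, but the paper's composite kernel satisfies $k_i(\mb{z},\mb{z})=\sigma_{f_i}^2+\sum_j u_j^2\sigma_{g_{(i,j)}}^2$, which is not bounded by $1$ without rescaling; with $k_i(\mb{z},\mb{z})\leq C$ the same argument goes through but yields the constant $2C/\ln(1+C\sigma_{i,\text{ns}}^{-2})$ instead. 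This is an unstated hypothesis inherited from the cited source rather than a gap in your reasoning.
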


Lemma \ref{lemma:srinivasregretbound} allows us to bound the mean of the error as new data is added. 

% In order to obtain a uniform bound on the error, we also need to be able to bound the Lipschitz constant of the kernel. To this end, we employ the following result.
% \begin{lemma}[{[\citenum{curi2020efficient}, Lemma 13]}]
% \label{lemma:curistddifference}

% For every $i =1,...,n$ and every dataset $\mathcal{D}_{i,N}$, posterior variance $\sigma^2_{i,N}(\cdot)$ satisfies
% \begin{align*}
% &\vert \sigma_{i,N}(\mb{z}) - \sigma_{i,N}(\mb{z}') \vert \\ \leq &  %2\vert k_{i}\vert_{\infty} 
% \sqrt{k_i( \mb{z}, \mb{z}) + k_i( \mb{z}', \mb{z}') - 2k_i( \mb{z}, \mb{z}')}.\end{align*}
% %where $\vert k_{i}\vert_{\infty} \triangleq \max_{\mb{z}\in\mathcal{X}\times\mathcal{U}} k_i(\mb{z},\mb{z})$.
% \end{lemma}

% As a direct consequence of Lemma \ref{lemma:curistddifference}, we have the following result.

% \begin{corollary}

% For every $i =1,...,n$ and every dataset $\mathcal{D}_{i,N}$, posterior variance $\sigma^2_{i,N}(\cdot)$ satisfies
% \[\vert \sigma^2_{i,N}(\mb{z}) - \sigma^2_{i,N}(\mb{z}') \vert \leq L_i \lVert \mb{z}-\mb{z}'\rVert_2\]and the Lipschitz constant $L_i$ is independent of $\mathcal{D}_{i,N}$.
% \end{corollary}
% \begin{proof}
%     This follows directly from Lemma \ref{lemma:curistddifference} and the fact that the composite kernel \eqref{eq:compositekernel} is continuous, which, together with the boundedness of $\mathcal{X} \times \mathcal{U}$, implies Lipschitz continuity.
% \end{proof}

In order to put everything together, we now only need to determine how quickly $\gamma_{i,N}$ grows. To this end, we employ the two following results, which imply that the growth of the error bound is logarithmic at worst.

\begin{lemma}[{[\citenum{Srinivas2012}, Theorem 5]}]
\label{lemma:infogainsqkernel}
        Let $k_{f_i}$, $ k_{g_{i,j}}$, be squared-exponential kernels. Then, there exist finite positive constants $C_{f_i}, C_{g_i} \in (0, \infty)$, such that, for $N\in \mathbb{N}$ large enough, 
    \begin{align}
        \max_{\mb{z}^{(1)},...\mb{z}^{(N)}} \frac{1}{2}\left\vert \mb{I}_N + \sigma_{i,\text{ns}}^{-1}\mb{K}_{f_i,N} \right\vert &\leq C_{f_i} \left(\log\left(N\right)\right)^{n+m+1} \\
        \max_{\mb{z}^{(1)},...\mb{z}^{(N)}} \frac{1}{2}\left\vert \mb{I}_N + \sigma_{i,\text{ns}}^{-1}\mb{K}_{g_{i,j},N} \right\vert &\leq C_{g_i} \left(\log\left(N\right)\right)^{n+m+1} ,
    \end{align}
    where the entries of  $\mb{K}_{f_i,N}$ and $\mb{K}_{g_{i,j},N}$ are given by \[\left[\mb{K}_{f_i,N}\right]_{q,l} = k_{f_i}(\mb{z}_q, \mb{z}_l), \qquad \left[\mb{K}_{g_{i,j},N}\right]_{q,l} = k_{g_{i,j}}(\mb{z}_q, \mb{z}_l).\]
\end{lemma}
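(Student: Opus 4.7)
The result is a direct instance of Theorem~5 of \cite{Srinivas2012} applied to each of the kernels $k_{f_i}$ and $k_{g_{i,j}}$ individually. All of these are squared-exponential kernels defined on the compact domain $\mathcal{X}\times\mathcal{U}\subset\mathbb{R}^{n+m}$ (compactness of $\mathcal{X}$ follows from Assumption~\ref{assumption:Lipschitzh}, and compactness of $\mathcal{U}$ is part of the standing setup), so the plan is simply to retrace the argument of that reference in our notation.

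The strategy has three main steps. First, I would apply Mercer's theorem on the compact set $\mathcal{X}\times\mathcal{U}$ to decompose each kernel as $k(\mathbf{z},\mathbf{z}')=\sum_{j\ge 1}\lambda_j\phi_j(\mathbf{z})\phi_j(\mathbf{z}')$ with non-negative, non-increasing eigenvalues $\lambda_j$ and an orthonormal system $\{\phi_j\}$. For a squared-exponential kernel on a compact subset of $\mathbb{R}^d$ it is classical that the eigenvalues decay super-geometrically, of the form $\lambda_j\le c\exp(-c'\,j^{1/d})$ for positive constants $c,c'$ depending on the length scale, the amplitude, and the diameter of the domain. Second, I would substitute this decay rate into the general information-gain bound of \cite{Srinivas2012} (their Theorem~8), which controls the left-hand side by a trade-off between the head of the spectrum (contributing $O(T\log N)$ after truncating to the top $T$ eigenvalues) and the exponentially small tail. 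Third, I would optimize the truncation level: balancing the two contributions gives $T\sim(\log N)^{n+m}$ and hence a bound of the form $C(\log N)^{n+m+1}$ with $C$ depending only on the kernel hyperparameters and the geometry of $\mathcal{X}\times\mathcal{U}$.

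Applying this argument to $k_{f_i}$ yields the constant $C_{f_i}$, and applying it separately to each $k_{g_{i,j}}$ yields constants that can be collected into a single $C_{g_i}$. The only point requiring a little care is that the composite kernel \eqref{eq:compositekernel} is defined in terms of $\mathbf{z}=(x^\top,u^\top)^\top$, so the effective input dimension is $n+m$, not $n$; this is precisely what produces the exponent $n+m+1$ in the final bound.

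The main technical obstacle is the super-geometric spectral decay of the squared-exponential kernel on a compact domain, which is the ingredient that feeds the polylogarithmic rate into the information-gain machinery. Since this analysis is carried out in full in \cite{Srinivas2012}, the proof of the present lemma reduces to verifying the hypotheses of their Theorem~5--compactness of the input domain and the squared-exponential form of the kernels--and to identifying $d=n+m$ as the effective input dimension. No genuinely new analysis beyond bookkeeping of constants across the $i$ and $(i,j)$ indices is required.
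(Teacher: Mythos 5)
Your proposal matches the paper exactly: the paper gives no proof of this lemma beyond the citation to Theorem 5 of \cite{Srinivas2012}, and your sketch (Mercer decomposition, super-geometric spectral decay of the squared-exponential kernel on a compact domain, truncation trade-off in the information-gain bound) is a faithful account of how that reference obtains the $(\log N)^{d+1}$ rate with effective dimension $d=n+m$. One small slip: you attribute compactness of $\mathcal{X}$ to \Cref{assumption:Lipschitzh}, which only asserts compactness of the safe set $\mathcal{C}$ while the paper takes $\mathcal{X}$ open; the hypothesis of the cited theorem should be checked on whatever compact region the trajectories actually occupy, but this does not change the substance of the argument.
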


\begin{lemma}
\label{lemma:growthofgammai}
Let $k_i$ be given as in \eqref{eq:compositekernel}. Then there exists a scalar $C_i$, such that%, for $N$ large enough,
\begin{align}
    \gamma_{i,N} \leq C_i \left(\log\left(N\right)\right)^{n+m+1}.
\end{align}
\label{lemma:compositekernel}
\end{lemma}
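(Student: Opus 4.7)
The plan is to exploit the additive structure of the composite kernel \eqref{eq:compositekernel} and reduce the bound to sums of the information gains for the squared-exponential kernels $k_{f_i}$ and $k_{g_{(i,j)}}$, each of which is already controlled by Lemma~\ref{lemma:infogainsqkernel}. Writing $\mb{U}_j \triangleq \text{diag}(u_j^{(1)},\ldots,u_j^{(N)})$, the Gram matrix induced by any sample sequence $\mb{z}^{(1)},\ldots,\mb{z}^{(N)}$ decomposes as
\[
\mb{K}_{i,N} \;=\; \mb{K}_{f_i,N} \;+\; \sum_{j=1}^m \mb{U}_j\, \mb{K}_{g_{(i,j)},N}\, \mb{U}_j ,
\]
where each summand is positive semi-definite.

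The first step is to establish sub-additivity of the log-determinant on such sums: for any PSD matrices $A,B$ one has $\log\det(\mb{I}+A+B) = \log\det(\mb{I}+A) + \log\det(\mb{I}+(\mb{I}+A)^{-1/2}B(\mb{I}+A)^{-1/2})$, and since $(\mb{I}+A)^{-1/2} \preceq \mb{I}$ the inner matrix has eigenvalues dominated by those of $B$, so the last term is upper bounded by $\log\det(\mb{I}+B)$ via Weyl's inequality. Iterating this identity over the $m+1$ summands of $\mb{K}_{i,N}$ yields
\[
\tfrac{1}{2}\log\det\!\bigl(\mb{I}+\sigma_{i,\text{ns}}^{-2}\mb{K}_{i,N}\bigr) \;\leq\; \tfrac{1}{2}\log\det\!\bigl(\mb{I}+\sigma_{i,\text{ns}}^{-2}\mb{K}_{f_i,N}\bigr) + \sum_{j=1}^m \tfrac{1}{2}\log\det\!\bigl(\mb{I}+\sigma_{i,\text{ns}}^{-2}\mb{U}_j\mb{K}_{g_{(i,j)},N}\mb{U}_j\bigr).
\]

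The second step is to absorb the weighting $\mb{U}_j$. Since $\U$ is compact (see \Cref{section:problemsetting}), there is a finite $M_u$ with $|u_j^{(q)}|\leq M_u$ for every $j,q$, so $\|\mb{U}_j\|_2\leq M_u$. An Ostrowski-type inequality for a PSD matrix $K$ and arbitrary $B$ gives $\lambda_q(BKB^\top)\leq \|B\|_2^2\,\lambda_q(K)$ for every $q$. Applied to $B=\mb{U}_j$ and $K=\mb{K}_{g_{(i,j)},N}$ and combined with $\det(\mb{I}+cX)=\prod_q(1+c\lambda_q(X))$, this yields $\log\det(\mb{I}+\sigma_{i,\text{ns}}^{-2}\mb{U}_j\mb{K}_{g_{(i,j)},N}\mb{U}_j)\leq \log\det(\mb{I}+\sigma_{i,\text{ns}}^{-2}M_u^2\,\mb{K}_{g_{(i,j)},N})$. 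The right-hand side corresponds to the information gain of a rescaled squared-exponential kernel, so Lemma~\ref{lemma:infogainsqkernel} applies to it (with a constant that differs only by the rescaling of the prior variance).

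In the third step I would take the maximum over all sample sequences and invoke Lemma~\ref{lemma:infogainsqkernel} to bound each of the $m+1$ terms by $C\,(\log N)^{n+m+1}$ for some constant $C$ depending only on $i$, $j$, $\sigma_{i,\text{ns}}$, and $M_u$. Summing these bounds gives the claim with $C_i \triangleq C_{f_i}+\sum_{j=1}^m \tilde{C}_{g_{(i,j)}}$. The most delicate step is the sub-additivity argument, because it requires the Loewner-order manipulation above rather than a purely multiplicative determinant identity; the compactness argument and final invocation of Lemma~\ref{lemma:infogainsqkernel} are then routine.
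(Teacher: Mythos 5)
Your proof is correct, but it takes a genuinely different route from the paper. The paper's proof is a one-line citation: it treats each summand $u_j k_{g_{(i,j)}}(x,x')u_j'$ as the product of the rank-one linear kernel $u_ju_j'$ with a squared-exponential kernel and invokes Theorem~2 of \cite{krause2011contextual} (information gain of a product with a rank-$d$ kernel is at most $d$ times that of the other factor plus $d\log N$), then combines the $m+1$ summands via Theorem~3 of \cite{krause2011contextual} (additivity of information gain up to $O(\log N)$ terms), and finally applies \Cref{lemma:infogainsqkernel}. You instead work directly on the Gram matrix: the decomposition $\mb{K}_{i,N}=\mb{K}_{f_i,N}+\sum_j \mb{U}_j\mb{K}_{g_{(i,j)},N}\mb{U}_j$ is right, your log-determinant sub-additivity argument via $\det(\mb{I}+A+B)=\det(\mb{I}+A)\det\bigl(\mb{I}+(\mb{I}+A)^{-1/2}B(\mb{I}+A)^{-1/2}\bigr)$ is sound (and in fact avoids the extra additive $2\log N$ terms of the cited Theorem~3), and the Ostrowski bound $\lambda_q(\mb{U}_jK\mb{U}_j)\leq M_u^2\lambda_q(K)$ legitimately reduces each weighted term to the information gain of a rescaled squared-exponential kernel, to which \Cref{lemma:infogainsqkernel} applies. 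The trade-off is that your treatment of the input weighting leans on compactness of $\U$ (which the paper does assume), whereas the Krause--Ong product-kernel theorem exploits only the rank-one structure of $u_ju_j'$ and needs no bound on the inputs; in exchange your argument is fully self-contained. Two small points: the eigenvalue domination you attribute to ``Weyl's inequality'' in the first step is really the same Ostrowski congruence bound you correctly name in the second step; and your use of log-determinants implicitly corrects a typo in the paper's statement of $\gamma_{i,N}$ and of \Cref{lemma:infogainsqkernel}, where the $\log$ is missing.
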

\begin{proof}
    The proof follows directly from \Cref{lemma:infogainsqkernel}, together with Theorem 2 and Theorem 3 in \cite{krause2011contextual}.
\end{proof}

\begin{lemma}
\label{lemma:growth_of_betaN}
    Let $\beta_{N} \triangleq \max_{i}\beta_{i,N}$. Then there exists a scalar $C_{\beta}\in(0, \infty)$, such that, for all $N>1$,
    \begin{align*}
        \beta_N \leq  C_\beta \left(\log\left(N\right)\right)^{\frac{n+m+1}{2}}.
    \end{align*}
\end{lemma}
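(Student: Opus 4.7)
The plan is to directly substitute the polylogarithmic bound on the information gain from Lemma \ref{lemma:growthofgammai} into the definition of $\beta_{i,N}$ and then take the maximum over $i$.

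First, I would fix $i\in\{1,\ldots,n\}$ and recall that
\begin{align*}
\beta_{i,N} = B_i + \sigma_{i,\text{ns}}\sqrt{2\bigl(\gamma_{i,N} + 1 + \ln(n\delta^{-1})\bigr)}.
\end{align*}
Applying Lemma \ref{lemma:growthofgammai} yields constants $C_i\in(0,\infty)$ with $\gamma_{i,N}\leq C_i(\log N)^{n+m+1}$ for $N$ large enough, so that
\begin{align*}
\beta_{i,N} \leq B_i + \sigma_{i,\text{ns}}\sqrt{2C_i(\log N)^{n+m+1} + 2\bigl(1+\ln(n\delta^{-1})\bigr)}.
\end{align*}

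Next, I would absorb the lower order additive constants into the dominant term. Since $(\log N)^{n+m+1} \geq (\log 2)^{n+m+1} > 0$ for all $N>1$, there exists a constant $\tilde C_i$ (depending on $B_i$, $\sigma_{i,\text{ns}}$, $C_i$, $\delta$, $n$, $m$) such that
\begin{align*}
\beta_{i,N} \leq \tilde C_i (\log N)^{\frac{n+m+1}{2}}
\end{align*}
holds for all $N>1$, using $\sqrt{a+b}\leq \sqrt{a}+\sqrt{b}$ and then bounding constants by multiples of $(\log N)^{(n+m+1)/2}$.

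Finally, since there are only finitely many indices $i\in\{1,\ldots,n\}$, setting $C_\beta \triangleq \max_i \tilde C_i$ yields $\beta_N \leq C_\beta(\log N)^{(n+m+1)/2}$, as desired. The argument is essentially routine; the only mild subtlety is handling the regime of small $N$, which is resolved by choosing $C_\beta$ large enough to cover the finitely many values of $N$ before the asymptotic bound of Lemma \ref{lemma:growthofgammai} kicks in.
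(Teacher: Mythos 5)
Your proposal is correct and follows exactly the route the paper takes: the paper's proof is a one-line appeal to the definition of $\beta_{i,N}$ together with \Cref{lemma:growthofgammai}, and your write-up simply fills in the routine details (absorbing the additive constants and taking the maximum over the finitely many indices $i$). Your extra care about the ``$N$ large enough'' regime inherited from \Cref{lemma:infogainsqkernel} is a sensible touch the paper leaves implicit.
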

\begin{proof}
    The proof follows directly from the definition of $\beta_{i,N}$ and \Cref{lemma:growthofgammai}.
\end{proof}

\Cref{lemma:chowdhury,lemma:srinivasregretbound,lemma:infogainsqkernel,lemma:compositekernel,lemma:growth_of_betaN} imply that the GP error bound grows logarithmically, i.e., sublinearly. We will leverage this to show that, if we learn our system fast enough, then the model error becomes small enough to recover feasibility before the boundary of the safe set is reached. We will need the following result to determine the required learning rate, which provides an upper bound on the point when linear growth overtakes logarithmic growth.

\begin{lemma}
\label{lemma:exp_overtakes_linear}
    For $a,b>0$, and any $\lambda$ with 
    \begin{align*}
       \lambda\geq \frac{2}{a}\left( \log(b)-\log\left(\frac{a}{2}\right)\right)
    \end{align*}
    it holds that 
    \begin{align*}
        \exp(a \lambda) \geq b(1+\lambda).
    \end{align*}
\end{lemma}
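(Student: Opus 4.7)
The plan is to reduce the exponential-versus-linear comparison to a pair of elementary bounds by splitting the exponent symmetrically in half. First, I would rewrite the hypothesis $\lambda \geq \tfrac{2}{a}(\log b - \log(a/2))$ in its exponentiated form: it is equivalent to $\exp(a\lambda/2) \geq 2b/a$, which identifies $\lambda_0$ as exactly the value of $\lambda$ at which $\exp(a\lambda/2)$ reaches the threshold $2b/a$. This reformulation is the key, because it pre-factors the eventual target in a way that isolates the constant $b$ from the slope $b\lambda$.

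Second, I would invoke the universal inequality $\exp(x) \geq 1 + x$ at $x = a\lambda/2$, yielding $\exp(a\lambda/2) \geq 1 + a\lambda/2$. Combining the two lower bounds on the two factors in the identity $\exp(a\lambda) = \exp(a\lambda/2)\cdot\exp(a\lambda/2)$ gives
\[\exp(a\lambda) \;\geq\; \frac{2b}{a}\Bigl(1 + \frac{a\lambda}{2}\Bigr) \;=\; \frac{2b}{a} + b\lambda,\]
which already produces the correct slope $b$, matching the right-hand side $b(1+\lambda) = b + b\lambda$ of the target inequality term-by-term in $\lambda$.

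The remaining step is the comparison of constant terms to conclude $\exp(a\lambda) \geq b(1+\lambda)$, and this is where I expect the argument to require the most care: the multiplicative split produces $2b/a$ in place of $b$, which is the natural slack from halving the exponent. I would handle this by checking tightness directly at $\lambda = \lambda_0$, where both sides are explicitly computable in terms of $u \triangleq 2b/a$, and then propagating the inequality monotonically to all $\lambda \geq \lambda_0$ using the convexity of $F(\lambda) \triangleq \exp(a\lambda) - b(1+\lambda)$ and the nonnegative sign of $F'(\lambda_0) = a\exp(a\lambda_0) - b = 4b^2/a - b$ guaranteed by the hypothesis. Once the endpoint case is secured, convex monotonicity closes the proof for every $\lambda \geq \lambda_0$.
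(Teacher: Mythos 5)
Your first three steps reproduce the paper's argument exactly: the paper also splits $\exp(a\lambda)=\exp(a\lambda/2)\cdot\exp(a\lambda/2)$, bounds one factor below by $2b/a$ using the hypothesis in exponentiated form, and applies $e^{x}\geq 1+x$ to the other (it phrases the latter through a power-series expansion with $c=-\log(a/2)$, but the content is the same). Up to $\exp(a\lambda)\geq \frac{2b}{a}+b\lambda$ your reasoning is correct.

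The closing step is where the proposal breaks. The assertion that $F'(\lambda_0)=a\exp(a\lambda_0)-b=4b^{2}/a-b$ is nonnegative ``guaranteed by the hypothesis'' is false: it is equivalent to $a\leq 4b$, which the hypothesis (a lower bound on $\lambda$ alone) does not supply, and it can fail even when the conclusion holds (e.g.\ $a=2$, $b=1/10$ gives $F'(\lambda_0)=-0.08$). Worse, the endpoint check itself can fail: for $a=8$, $b=1$ one gets $\lambda_0=-\frac{1}{4}\log 4$ and $F(\lambda_0)=\frac{4b^{2}}{a^{2}}-b(1+\lambda_0)\approx -0.59<0$, i.e.\ the lemma as stated is actually false for such $(a,b)$, so no convexity argument can rescue the plan there. (The paper's own proof carries the same hidden restriction: its substitution $c=-\log(a/2)$ requires $c>0$, i.e.\ $a<2$, and in the application $a=2\eta$ is indeed small.) Under that restriction $a\leq 2$ you do not need any of the extra machinery: $\frac{2b}{a}\geq b$, so your third step already yields $\exp(a\lambda)\geq \frac{2b}{a}+b\lambda\geq b(1+\lambda)$. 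In short: same route as the paper, but the remaining constant-term comparison should be closed by the elementary observation $2/a\geq 1$ (valid exactly in the regime where the statement is true), not by the sign of $F'(\lambda_0)$.
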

\begin{proof}
    For any $c>0$, it holds that
    \begin{align*}
       &b\exp(\lambda\exp(-c))   = b\sum_{j=0}^\infty \exp(-cj) \frac{\lambda^j}{j!} \\=& b\sum_{j=0}^\infty \frac{(\lambda\exp(-c) )^j}{j!} 
       \geq b\exp(-c) \sum_{j=0}^\infty \frac{\lambda^j}{j!} \\ \geq & b\exp(-c)  (1+\lambda).
    \end{align*}
    Hence, for $c=-\log(a/2)$, we have
    \begin{align*}
        b(1+\lambda) \leq & \exp\left(\log(b)+\lambda\frac{a}{2}-\log\left(\frac{a}{2}\right)\right) \\
        \leq & \exp\left(\lambda\frac{a}{2}+\lambda\frac{a}{2}\right) =  \exp(a\lambda).
    \end{align*}
    
\end{proof}

\section{Recovering Feasibility with GPs}
\label{section:mainresult}

In this section, we present an algorithm geared towards guaranteeing safety by exploring the state and input space efficiently, such that the feasibility of a CBF-based criterion is guaranteed before the boundary of the safe set is reached, which in turn implies safety. We start by showing that safety is guaranteed whenever a CBF-based criterion is satisfied, then present our algorithm, which ensures safety.

For all $N\in\mathbb{N}$, let $\Pi_N: \mathcal{X} \to \mathcal{P}(\R^m) $ be defined as
\begin{align*}
%\label{eq:feasibleset}
    \Pi_N(x)  \triangleq\left\{ 
    \begin{array}{l|c}
         &\frac{\partial h }{\partial x } (x)\bs{\mu}_N(x, u) \\
         u \in \mathcal{U} & \geq\\
         &
     - \alpha(h(x)) +  L_h  \beta_N\sqrt{\text{tr}\left(\mb{\Sigma}_N^2(x,u)\right)} \end{array} \right\}
\end{align*}
where $L_h >  0 $ is the global Lipschitz constant of $h$ on $\mathcal{X}$ specified in Assumption \ref{assumption:Lipschitzh}.

Note that, due to the nature of the composite kernels \eqref{eq:compositekernel}, $\bs{\mu}_N(x, u)$ is a linear function of $u$ and $\mb{\Sigma}_N^2(x,u)$ is a diagonal matrix whose entries are positive definite quadratic functions of $u$. Hence, $\Pi_N$ is defined using a second-order cone (SOC) constraint.

%making $\Pi_N$ a subset of $\mathcal{U}$, defined using a second-order cone (SOC) constraint.

The specification of $\Pi_N$ is motivated by the fact that if there exists a locally Lipschitz $\pi: \mathcal{X} \to \mathcal{U}$ such that $\pi(x) \in \Pi_N(x)$ for all $x \in \mathcal{X}$, %$\Pi_N(x)$ is non-empty for all $x \in\mathcal{C}$, 
then $\mathcal{C}$ can be rendered forward invariant, i.e., safety is robustly guaranteed for the uncertain system. We show this formally in the following.

\begin{lemma}
\label{lemma:safetyifsetisnonempty}
Let \Cref{assumption:knownboundontimederivative,assumption:Lipschitzh,assp:robust_cbf_feasibility,assumption:noise,assumption:finiteRKHSnorm} hold, let $\Pi_N(x) $ be non-empty for all $x \in \mathcal{X}$ and an arbitrary $N\in\mathbb{N}$, and let $\pi: \mathcal{X} \to \mathcal{U}$ be a locally Lipschitz controller such that $\pi(x)\in \Pi_N(x) $ for all $x \in \mathcal{X}$. Then the closed loop system is safe with respect to $\mathcal{C}$ with probability at least $1-\delta$. 
\end{lemma}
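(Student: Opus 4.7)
The plan is to show that the condition defining $\Pi_N$ is a high-probability \emph{robustification} of the standard CBF condition, so membership in $\Pi_N$ implies the classical CBF inequality on the true dynamics, after which \Cref{thm:standard_cbf} yields safety with respect to $\mathcal{C}$.

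First I would use \Cref{lemma:chowdhury}: with probability at least $1-\delta$, for every $i\in\{1,\dots,n\}$, every $N\in\mathbb{N}$, and every $\mathbf{z}=(x^\top,u^\top)^\top\in\mathcal{X}\times\mathcal{U}$,
\[
\left| f_i(x)+\sum_{j=1}^m g_{i,j}(x)u_j - \mu_{i,N}(\mathbf{z})\right|\le \beta_{i,N}\,\sigma_{i,N}(\mathbf{z})\le \beta_N\,\sigma_{i,N}(\mathbf{z}),
\]
using $\beta_N=\max_i\beta_{i,N}$. All subsequent steps of the proof are then conditioned on this probability-$(1-\delta)$ event.

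Next I would bound the error in $\dot h$. By \Cref{assumption:Lipschitzh}, $h$ is $L_h$-Lipschitz on $\mathcal{X}$, so at every differentiability point $\bigl\lVert \tfrac{\partial h}{\partial x}(x)\bigr\rVert_2\le L_h$. Applying Cauchy--Schwarz and the previous componentwise bound gives, for $u=\pi(x)$,
\begin{align*}
\left|\tfrac{\partial h}{\partial x}(x)\bigl(f(x)+g(x)\pi(x)-\bs{\mu}_N(x,\pi(x))\bigr)\right|
&\le L_h\,\sqrt{\textstyle\sum_{i=1}^n \beta_N^2\,\sigma_{i,N}^2(x,\pi(x))}\\
&= L_h\,\beta_N\,\sqrt{\text{tr}\bigl(\bs{\Sigma}_N^2(x,\pi(x))\bigr)}.
\end{align*}
Combining this with the defining inequality of $\Pi_N(x)$, which $\pi(x)$ is assumed to satisfy, yields
\[
\tfrac{\partial h}{\partial x}(x)\bigl(f(x)+g(x)\pi(x)\bigr)\ \ge\ \tfrac{\partial h}{\partial x}(x)\bs{\mu}_N(x,\pi(x)) - L_h\,\beta_N\,\sqrt{\text{tr}\bigl(\bs{\Sigma}_N^2(x,\pi(x))\bigr)}\ \ge\ -\alpha(h(x)),
\]
for every $x\in\mathcal{X}$. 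Thus the true closed-loop system satisfies the CBF inequality pointwise.

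Finally, since $\pi$ is locally Lipschitz, the closed-loop vector field $f+g\pi$ is locally Lipschitz (by local Lipschitz continuity of $f$ and $g$), solutions exist and are unique, and the forward-completeness assumption from \eqref{eq:cl_dyn} applies. Invoking \Cref{thm:standard_cbf} on the inequality just derived gives forward invariance of $\mathcal{C}$, i.e., safety. Because the entire chain holds on the probability-$(1-\delta)$ event of \Cref{lemma:chowdhury}, the claimed probabilistic safety guarantee follows. The main subtlety I would double-check is that $\tfrac{\partial h}{\partial x}$ is controlled by $L_h$ (which does follow from Rademacher's theorem plus the $C^1$ regularity assumed on $h$), and that the $\max_i$ in the definition of $\beta_N$ is indeed sufficient to dominate each componentwise bound inside the Cauchy--Schwarz step; all other manipulations are routine.
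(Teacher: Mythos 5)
Your proof is correct and follows essentially the same route as the paper's: decompose $\dot h$ into the GP-mean term plus the residual, bound the residual via Cauchy--Schwarz and \Cref{lemma:chowdhury} by $L_h\beta_N\sqrt{\text{tr}(\bs{\Sigma}_N^2)}$, use the defining inequality of $\Pi_N$, and invoke \Cref{thm:standard_cbf}. The only difference is that you make explicit the step $\lVert \tfrac{\partial h}{\partial x}(x)\rVert_2\le L_h$, which the paper leaves implicit between its last two inequalities.
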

\begin{proof}
By \Cref{lemma:chowdhury}, with probability at least $1-\delta$,
    \begin{align*}
        & \frac{d h}{dt}(x) = \frac{\partial h}{\partial x}(x)\left(f(x) + g(x)\pi(x)\right) \\
         = &\frac{\partial h}{\partial x}(x)\left(\bs{\mu}_N(x,\pi(x))\right) \\
        & + \frac{\partial h}{\partial x}(x)\left(f(x) + g(x)\pi(x) - \bs{\mu}_N(x,\pi(x))\right)\\
        \geq &  \frac{\partial h}{\partial x}(x)\left(\bs{\mu}_N(x,\pi(x))\right)  \\
        &- \left\Vert\frac{\partial h}{\partial x}(x)\right \Vert_2\left\Vert\left(f(x) + g(x)\pi(x) - \bs{\mu}_N(x,\pi(x))\right)\right\Vert_2 \\
       \geq &  \frac{\partial h}{\partial x}(x)\left(\bs{\mu}_N(x,\pi(x))\right) - \left\Vert \frac{\partial h }{\partial x}(x) \right\Vert_2 \beta_N \sqrt{\text{tr}\left(\mb{\Sigma}_N^2(x,u)\right)} \\
        \geq &  -\alpha(h(x)) ,
    \end{align*}
    holds for all $x\in\mathcal{X}$,
    where the last inequality comes from the fact that $\pi(x) \in \Pi_N(x) $ for all $x$. Since $\pi$ satisfies the the properties of Lemma \ref{thm:standard_cbf} with probability $1 - \delta $, the set $\mathcal{C}$ is safe with probability $1 - \delta$. 
\end{proof}

\begin{figure*}[th]
\centering
\begin{subfigure}[b]{0.99\textwidth}
\includegraphics[trim={0pt 479pt 0pt 110pt},clip,width = 0.99\columnwidth]{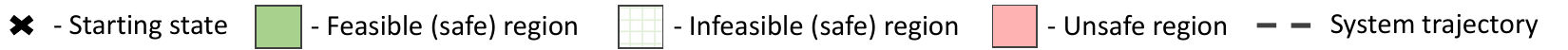}
\end{subfigure}
\begin{subfigure}[b]{0.31\textwidth}
\includegraphics[trim={120pt 90pt 250pt 170pt},clip,width = 0.99\columnwidth]{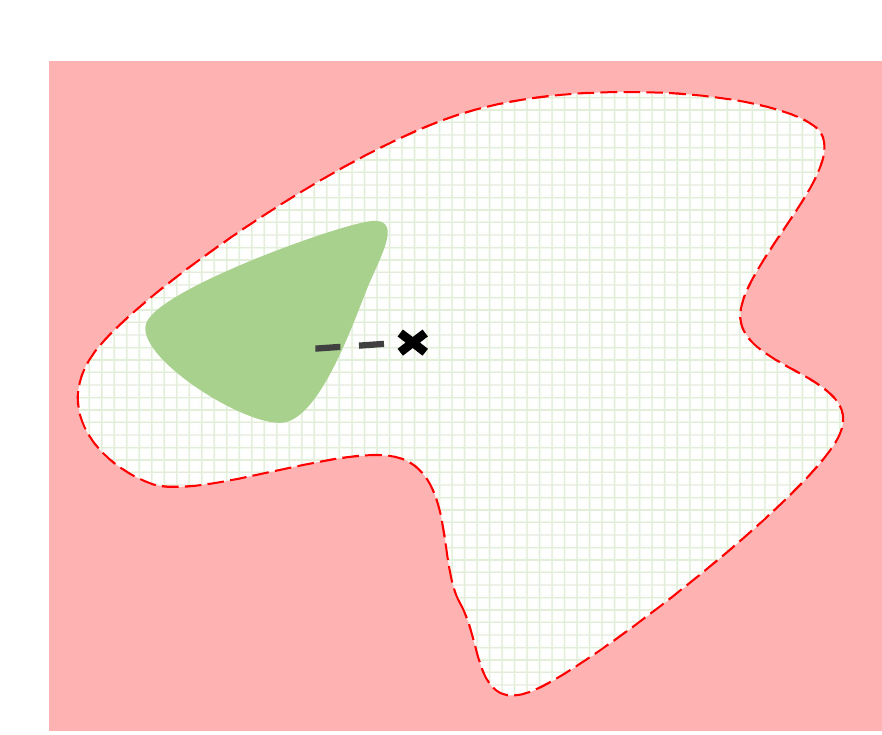}
\caption{System state exits the region where (GP-CBF-SOCP) is infeasible. \\\hspace{\textwidth} \\\hspace{\textwidth}} % \\\hspace{\textwidth}  } 
\label{fig:constr_satisf_uniform}
\end{subfigure}
\hfill
\begin{subfigure}[b]{0.31\textwidth}
\includegraphics[trim={120pt 90pt 250pt 170pt},clip,width = 0.99\columnwidth]{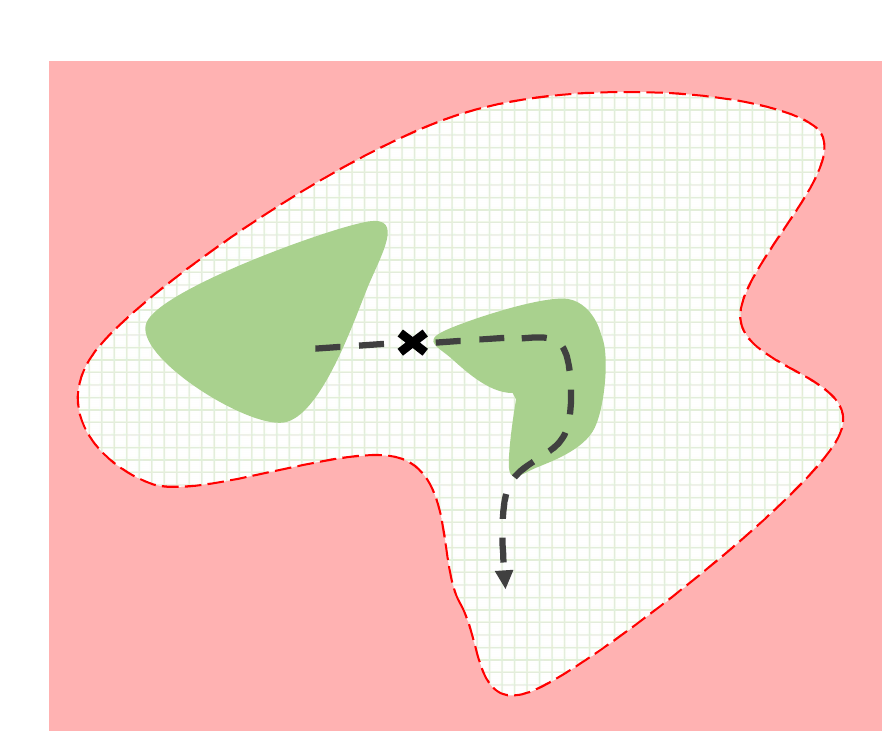}
\caption{Algorithm starts navigating and collecting data by applying \eqref{eq:gp_ucb}. \\\hspace{\textwidth} \\\hspace{\textwidth}} 
\label{fig:cost_uniform}
\end{subfigure}
\hfill
\begin{subfigure}[b]{0.31\textwidth}
\includegraphics[trim={120pt 90pt 250pt 170pt},clip,width = 0.99\columnwidth]{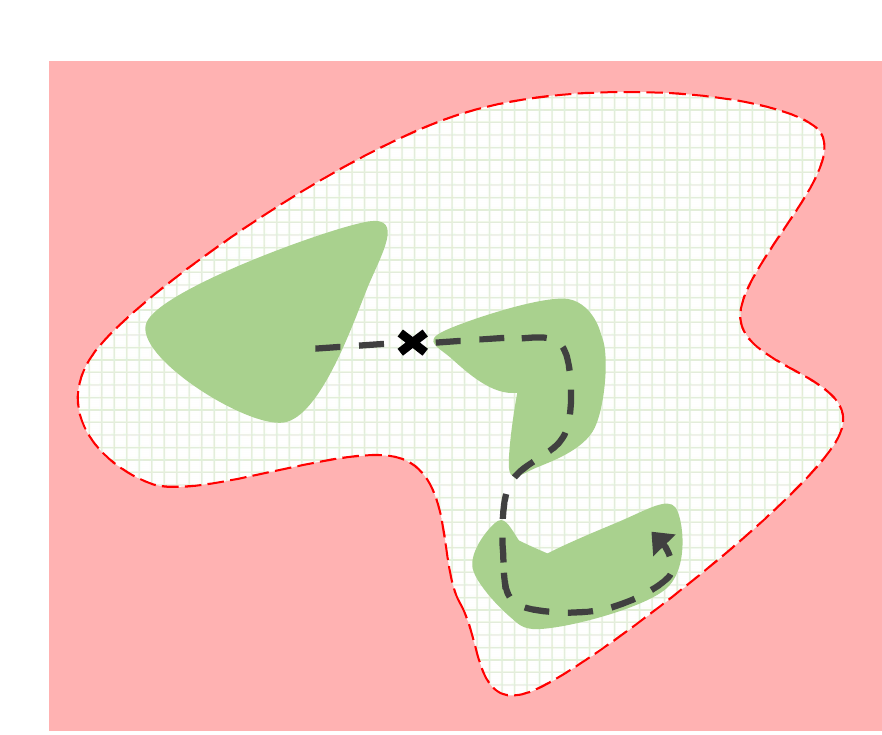}
\caption{\Cref{thm:mainresult} guarantees that the feasibility of (GP-CBF-SOCP) is recovered before the boundary of the safe set is reached, which enables us to compute safe inputs.} 
\label{fig:cost_uniform}
\end{subfigure}
\caption{Illustration of \Cref{alg:feasibility_recovering_algorithm_temporal} and \Cref{thm:mainresult}.}
\label{fig:illustration_of_algorithm}
\end{figure*}

% {\color{red} Alex: I was not sure where to include the following Lemma and the two proofs. We should discuss this.}
% \begin{lemma}
%     If $\Pi_N(x)$ is non-empty and bounded for all $x \in \mathcal{C}$, then there exists a controller $\pi: \mathcal{X} \to \mathcal{U} $ such that $\pi(x) \in \Pi_N(x) $ for all $x \in \mathcal{C}$. Additionally, if the dataset updates occur at an interval that is lower bounded by a positive constant, then the closed loop dynamics with this Lipschitz $\pi(x)$ are piecewise Lipschitz continuous and the solution is unique. \textcolor{blue}{RKC: move to appendix}
% \end{lemma}
% \begin{proof}
%     Since the filter with the relaxed SOC constraint is Lipschitz continuous \cite{dean2021guaranteeing}, SOCP. 

%     Also, closed-loop solutions of a sampled-data system are uniquely determined under piecewise continuous, bounded control inputs \cite[Sec. 2.2]{lars2011nonlinear}. 
% \end{proof}
% \begin{proof}
% Since $\pi$ is a locally Lipschitz controller such that $\pi(x) \in \Pi_N(x)$ for all $t \geq T $ and $x \in \mathcal{C}$ for all $t \in [0, t)$, the system is safe for all $t \geq 0 $ by Theorem \eqref{eq:cbf_constraint} with probability $1 - \delta$ due to \eqref{lemma:chowdhury}.
% \end{proof}

Given a nominal controller $\pi_\textrm{nom}: \mathcal{X} \to \mathcal{U}$, 
% that satisfies \Cref{assumption:nomctrlislipschitz},
we now formulate the robust CBF-SOCP that takes the GP model error into account:
\begin{align}
\begin{matrix}\pi_{\textrm{safe}}(x) = \argmin_{u \in \mathcal{U}}  \Vert u - \pi_\textrm{nom}(x) \Vert_2 \\
   \qquad \hspace{20pt} \textrm{s.t. } u \in \Pi_N(x). \end{matrix} \tag{GP-CBF-SOCP} \label{eq:cbf-qp}\nonumber
\end{align}

% \begin{theorem}
%     If assumption \ref{assp:robust_cbf_feasibility} holds and $\Vert \frac{\partial h }{\partial x}(x) \Vert  \leq M $ for some $M>0$ and all $x \in \mathcal{C}$, then there exists a dataset $\mathcal{D}$ such that $\Pi_N(x)$ is non-empty for all $x \in \mathcal{X}$
% \end{theorem}

% \begin{proof}
% \begin{align}
%     \Pi_N & = \left\{ u ~|~ \frac{\partial h}{\partial x} \left( f + g u\right) + \frac{\partial h }{\partial x}(\mu_f - f + (\mu_g + g)u) \geq - \alpha (h)  + M \beta \sigma_\mathcal{D} \right\} \\
%     & \supseteq \left\{u ~|~ \frac{\partial h}{\partial x} \left( f + g u\right) - M \beta \sigma_\mathcal{D}\geq - \alpha (h)  + M \beta \sigma_\mathcal{D}  \right\}  \\
%     & = \left\{u ~|~ \frac{\partial h}{\partial x} \left( f + g u\right) \geq - \alpha (h)  + 2 M \beta \sigma_\mathcal{D}  \right\}  \\
%     & \supseteq \left\{u ~|~ \frac{\partial h}{\partial x} \left( f + g u\right) \geq - \alpha (h)  + 2 M \delta  \right\}, \textrm{ since } \exists \delta, \textrm{ s.t. } \beta \sigma_\mathcal{D} < \delta\\
%     & \supseteq \left\{u ~|~ \frac{\partial h}{\partial x} \left( f + g u\right) \geq - \alpha (h)  + \epsilon \right\}, \textrm{ choose } 2M\delta \leq \epsilon  
%     \end{align}
% \end{proof}

We now provide data sampling schemes that recover feasibility of the GP-CBF-SOCP before reaching the boundary of the safe set $\mathcal{C}$. The concept and theoretical guarantees of our approach borrow from the well-known GP upper confidence bound algorithm \cite{Srinivas2012}, which aims to find the maximum of an unknown function by balancing exploration and exploitation. Whenever the SOC constraint is infeasible (i.e., $\Pi_N(x)$ is empty) given the present data set $\mathcal{D}_N$, we propose choosing the input $u$ by solving the optimization problem
\begin{align}
\label{eq:gp_ucb}
\begin{split}
   u^{(N+1)} = & \arg\sup_{u\in\mathcal{U}} \Bigg[ \frac{\partial h }{\partial x } \left(x^{(N+1)}\right)\mu_{N}\left(x^{(N+1)},u\right) \\&+ L_h\beta_{N} \sqrt{\text{tr}\left(\mb{\Sigma}_{N}^2\left(x^{(N+1)},u\right)\right)} \Bigg].
   \end{split}
\end{align}

The idea behind solving \eqref{eq:gp_ucb} is to try and find the control input $u$ that maximizes the unknown time derivative $\frac{dh}{dt}(x)$ of $h(x)$.% To this end, we choose the input $u$ that maximizes the weighted difference between the predicted time-derivative of $h(x)$ and the model uncertainty. 
\begin{remark}
Note that we are adding the model uncertainty to the model in 
\eqref{eq:gp_ucb}, as opposed to subtracting it as we do in the formulation of the feasible set of the SOCP $\Pi_N(x)$. This is because we want to emphasize exploration over safety whenever the (GP-CBF-SOCP) is infeasible.
\end{remark}
It can be shown that, under the assumption that the state and input space can be sampled arbitrarily, i.e., without constraints imposed by the dynamical system, solving \eqref{eq:gp_ucb} recursively eventually yields a result that is arbitrarily close to the true maximum $ \sup_{u \in \mathcal{U}} \frac{\partial h}{ \partial x} (x)(f(x) + g(x) u)$, eventually recovering feasibility of the (GP-CBF-SOCP) \cite{Srinivas2012}. However, in our setting, we need to recover feasibility before the system reaches the boundary of the safe set $\mathcal{C}$, i.e., we need to sample sufficiently frequently so as to recover feasibility before the boundary of the safe set is reached. To this end, we present an algorithm that samples new data points at pre-specified time intervals in a way that guarantees that $\Delta N$ data points are sampled before the boundary of the safe set $\mathcal{C}$ is reached. This is achieved by leveraging the upper bound on the time-derivative of the system to compute the maximal sampling frequency required to collect at least $\Delta N$ data points before reaching the boundary. This is presented in \Cref{alg:feasibility_recovering_algorithm_temporal}. Note that we also require the control policy $\bar{\pi}(x^{(q)})$ that is applied to the system directly after sampling to be Lipschitz continuous and to correspond to $u^{(q)}$ at $x^{(q)}$. This corresponds to %Lines 
Line \ref{codeline:admissibleinputtemporal} in \Cref{alg:feasibility_recovering_algorithm_temporal}.% and %\ref{codeline:admissibleinputtemporal} in %\Cref{alg:feasibility_recovering_algorithm_spatial} and \Cref{alg:feasibility_recovering_algorithm_temporal}, respectively.

We can then show that, if we choose the number of data points $\Delta N$ to be collected before reaching the boundary $\partial \mathcal{C}$ high enough, then, with high probability, %\Cref{alg:feasibility_recovering_algorithm_spatial} and 
\Cref{alg:feasibility_recovering_algorithm_temporal} always recovers feasibility before reaching the boundary $\partial \mathcal{C}$, guaranteeing safety at all times, and stop collecting data after at most $\Delta N$ points have been collected, meaning that the \Cref{alg:feasibility_recovering_algorithm_temporal} only requires a finite amount of memory.

% \begin{algorithm}[t]
% \caption{CBF-Control and CBF-Value-Based Sampling Scheme with Feasibility Recovery}
% \label{alg:feasibility_recovering_algorithm_spatial}
% \begin{algorithmic}[1]
%  \renewcommand{\algorithmicrequire}{\textbf{Input:}}
%  \renewcommand{\algorithmicensure}{\textbf{Output:}}
%  \REQUIRE Maximal size of data set to be collected $\Delta N$
%  % \ENSURE Constraint parameter 
%   %\STATE Compute $r_{\mathcal{D}}(\bm{\gamma}) \coloneqq p(y\vert \mathcal{D}, \bm{\gamma})$
%    \IF{$\Pi_{N}(x)$ is not empty}
%   \STATE Solve (GP-CBF-SOCP) and apply $\pi_{\text{safe}}(x)$
%   \ELSE
%   \STATE Set $\tau = \frac{h(x)}{\Delta N}$ 
%   \STATE Set $q = 0$.
%   \WHILE{$\Pi_{N+q}(x)= \emptyset$}
%   \STATE Set $q = q+1$.
%     \STATE Set $x^{(q)}=x$ and compute $u^{(q)}$ by solving \eqref{eq:gp_ucb}. 
%     \WHILE{$h(x) \geq (\Delta N - q)\tau$}
% \STATE Apply any admissible locally Lipschitz controller $\bar{\pi}$ with $\bar{\pi}(x^{(q))} = u^{(q)}$ to the system. 
% \label{codeline:admissibleinputspatial}
% \ENDWHILE
% \STATE Set $\mb{y}^{(q)} = \dot{x}^{(q)} + \mb{\xi}^{(q)}$.
% \STATE Set $\mathcal{D}_{\Delta N} = \mathcal{D}_{\Delta N} \bigcap \{\mb{z}^{(q)}, \mb{y}^{(q)}\} $.
% \STATE Update GP and $\Pi_{N+q}$.
%   \ENDWHILE
% \STATE Set $N = N+q$, set $\Delta N = \Delta N - q$.
%   \ENDIF
% %   \FOR {$i = l-2$ to $0$}
% %   \STATE statements..
% %   \IF {($i \ne 0$)}
% %   \STATE statement..
% %   \ENDIF
% %   \ENDFOR
% %  \RETURN $P$ 
%  \end{algorithmic} 
%  \end{algorithm}

\begin{algorithm}[b]
\caption{CBF-Control and Sampling Scheme with Guaranteed Feasibility Recovery}
\label{alg:feasibility_recovering_algorithm_temporal}
\begin{algorithmic}[1]
 \renewcommand{\algorithmicrequire}{\textbf{Input:}}
 \renewcommand{\algorithmicensure}{\textbf{Output:}}
 \REQUIRE Size of data set to be collected $\Delta N$
 % \ENSURE Constraint parameter 
  %\STATE Compute $r_{\mathcal{D}}(\bm{\gamma}) \coloneqq p(y\vert \mathcal{D}, \bm{\gamma})$
   \IF{$\Pi_N(x)$ is not empty}
  \STATE Solve (GP-CBF-SOCP) and apply $\pi_{\text{safe}}(x)$
  \ELSE
  \STATE Set $\Delta t =  {h(x)} \left({M_{\Vert\frac{\partial h }{ \partial x } \Vert} M_{\dot{x}} \Delta N }\right)^{-1} $ 
  \STATE Set $q = 0$, $t_{\text{infeasible}} = t$. 
  \WHILE{$\Pi_{N+q}(x)$ is empty}
  \STATE Set $q = q+1$.
    \STATE Set $x^{(q)}=x$ and compute $u^{(q)}$ by solving \eqref{eq:gp_ucb}. 
    \WHILE{$t \leq t_{\text{infeasible}} + q\Delta t$}
\STATE Apply any admissible locally Lipschitz controller $\bar{\pi}$ with $\bar{\pi}(x^{(q)}) = u^{(q)}$ to the system. \label{codeline:admissibleinputtemporal} 
\ENDWHILE
\STATE Set $\mb{y}^{(q)} = \dot{x}^{(q)} + \mb{\xi}^{(q)}$.
\STATE Set $\mathcal{D}_{\Delta N} = \mathcal{D}_{\Delta N} \bigcap \{\mb{z}^{(q)}, \mb{y}^{(q)}\} $.
\STATE Update GP and $\Pi_{N+q}$.
  \ENDWHILE
\STATE Set $N = N+q$ and $\Delta N = \Delta N -q$.
  \ENDIF
%   \FOR {$i = l-2$ to $0$}
%   \STATE statements..
%   \IF {($i \ne 0$)}
%   \STATE statement..
%   \ENDIF
%   \ENDFOR
%  \RETURN $P$ 
 \end{algorithmic} 
 \end{algorithm}

\begin{theorem}
\label{thm:mainresult} 
Let \Cref{assumption:knownboundontimederivative,assumption:Lipschitzh,assp:robust_cbf_feasibility,assumption:noise,assumption:finiteRKHSnorm} hold, and let $x(0)$ be within the interior of the safe set $\mathcal{C}$, i.e., $x(0)\in \mathcal{C}\backslash \partial\mathcal{C}$. Furthermore, let $N$ denote the initial amount of data points used to train the GP, and choose
\[\eta \triangleq \left(\sum_{i=1}^n \frac{4L_h C_\beta C_i} {\epsilon \ln{\left(1 + \sigma_{i,\text{ns}}^{-2}\right)}} \right) ^{-\frac{2}{3(n+m+1)}} \frac{1}{3(n+m+1)}\]
where $C_i$ and $C_{\beta}$ are chosen as in \Cref{lemma:growthofgammai} and \Cref{lemma:growth_of_betaN}, respectively. Then, if we employ %\Cref{alg:feasibility_recovering_algorithm_spatial} or 
\Cref{alg:feasibility_recovering_algorithm_temporal} with
\begin{align*}
    \Delta N \geq & \eta^{-1} \left(\log\left(N\right)- \log\left(\eta\right)\right),
\end{align*}
with probability at least $1-\delta$, feasibility is recovered using a finite number of points before the state $x$ reaches $\partial\mathcal{C}$ and only samples the state space at most $\Delta N$ times, thus for any piecewise locally Lipschitz controller $\pi(x) \in \Pi_N(x)$ the system \eqref{eq:cl_dyn} safe with respect to $\mathcal{C}$ with probability at least $1 - \delta$. 
    
\end{theorem}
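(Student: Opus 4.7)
The plan is to combine three ingredients: a time-to-boundary bound guaranteeing that $\Delta N$ samples fit before $h(x)$ reaches zero; a pointwise consequence of infeasibility forcing the posterior variance at each sampled input to be sizeable; and a global information-gain bound limiting how often such variances can occur. Together these yield a contradiction once $\Delta N$ exceeds the stated threshold, forcing feasibility to be recovered before the state crosses $\partial\mathcal{C}$, at which point \Cref{lemma:safetyifsetisnonempty} delivers the claimed safety guarantee.

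For the time-to-boundary step I would note that \Cref{assumption:knownboundontimederivative} and \Cref{assumption:Lipschitzh} bound the rate of decrease of $h(x(t))$ by $L_h M_{\dot{x}}$, so the state needs at least $h(x)/(L_h M_{\dot{x}})$ units of time to exit $\mathcal{C}$ from the moment $t_{\text{infeasible}}$ of first infeasibility. The spacing $\Delta t$ in Line 4 of \Cref{alg:feasibility_recovering_algorithm_temporal} places exactly $\Delta N$ samples inside this window, so it suffices to show that feasibility is recovered within at most $\Delta N$ samples.

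For the pointwise lower bound I would assume for contradiction that $\Pi_{N+q}(x^{(q)})$ is empty for every $q = 1, \ldots, \Delta N$. At each such step, \Cref{assp:robust_cbf_feasibility} yields a $u^\star$ with $\frac{\partial h}{\partial x}(x^{(q)})(f(x^{(q)}) + g(x^{(q)}) u^\star) \geq -\alpha(h(x^{(q)})) + \epsilon$, and \Cref{lemma:chowdhury} then implies that the UCB objective in \eqref{eq:gp_ucb} evaluated at $u^\star$ is at least $-\alpha(h(x^{(q)})) + \epsilon$. Since $u^{(q)}$ maximizes that UCB the same bound holds at $u^{(q)}$; combining with the infeasibility inequality at $u^{(q)}$ and rearranging yields the pointwise lower bound $\sqrt{\text{tr}(\mb{\Sigma}_{N+q}^2(\mb{z}^{(q)}))} \geq \epsilon/(2 L_h \beta_{N+q})$ at every sampled point.

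Finally, I would contradict this pointwise lower bound with the sublinear cumulative-variance bound. Squaring and summing gives $\sum_{q=1}^{\Delta N} \text{tr}(\mb{\Sigma}_{N+q}^2(\mb{z}^{(q)})) \geq \Delta N \epsilon^2 / (4 L_h^2 \beta_{N+\Delta N}^2)$, whereas \Cref{lemma:srinivasregretbound} applied componentwise, together with \Cref{lemma:compositekernel} and \Cref{lemma:growth_of_betaN}, produces upper bounds of the form $\sum_q \text{tr}(\mb{\Sigma}_{N+q}^2) \leq A(\log(N+\Delta N))^{n+m+1}$ and $\beta_{N+\Delta N}^2 \leq C_\beta^2(\log(N+\Delta N))^{n+m+1}$. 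Combining reduces the persistent-infeasibility hypothesis to an inequality of the form $\Delta N \leq B(\log(N+\Delta N))^{2(n+m+1)}$; the prescribed $\eta$ is tuned precisely so that \Cref{lemma:exp_overtakes_linear} converts $\Delta N \geq \eta^{-1}(\log N - \log \eta)$ into a violation of this inequality, producing the required contradiction. The main obstacle I foresee is exactly this last step: carefully resolving $\log(N+\Delta N)$ versus $\log N$, chasing the exponent $2/(3(n+m+1))$ appearing in $\eta$, and verifying that \Cref{lemma:exp_overtakes_linear} cleanly closes the loop with the specific constants produced by the Cauchy–Schwarz / information-gain chain.
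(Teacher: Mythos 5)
Your proposal follows essentially the same route as the paper's proof: the time-discretization argument guaranteeing $\Delta N$ samples before $\partial\mathcal{C}$, the contradiction combining the infeasibility inequality with \Cref{assp:robust_cbf_feasibility}, \Cref{lemma:chowdhury} and the maximality of the UCB input \eqref{eq:gp_ucb} to force $2L_h\beta_{\tilde N}\sqrt{\mathrm{tr}(\mb{\Sigma}_{\tilde N}^2)}\geq\epsilon$ at every sample, the cumulative information-gain bound via \Cref{lemma:srinivasregretbound,lemma:growthofgammai,lemma:growth_of_betaN}, and finally \Cref{lemma:exp_overtakes_linear} to close the contradiction. The only divergence is bookkeeping in the summation step: your square-and-Cauchy--Schwarz route yields an exponent $2(n+m+1)$ on the logarithm, whereas the paper's chain produces $\tfrac{3}{2}(n+m+1)$ (matching the $\tfrac{2}{3(n+m+1)}$ appearing in $\eta$), so your version would require a correspondingly adjusted $\eta$ but is otherwise the same argument.
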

\begin{proof} 

Note that %\Cref{alg:feasibility_recovering_algorithm_spatial} and
\Cref{alg:feasibility_recovering_algorithm_temporal} collects at least $\Delta N$ data points before leaving the safe set $\mathcal{C}$. Hence, it is sufficient to show that $\Pi_N(x)$ is non-empty for all $x\in \mathcal{X}$ after $\Delta N$ points have been collected. Safety with respect to $\mathcal{C}$ with probability at least $1-\delta$ then follows from \Cref{lemma:safetyifsetisnonempty}. 

We then show that $\Pi_N(x)$ is non-empty for all $x\in \mathcal{X}$ after at most $\Delta N$ points have been collected by contradiction. Let $\tilde{N}\triangleq N+q$. Note that %\Cref{alg:feasibility_recovering_algorithm_spatial} and 
\Cref{alg:feasibility_recovering_algorithm_temporal} only collects data points whenever $\Pi_{\tilde{N}}(x)$ is empty, implying
    \begin{align*}
& \frac{\partial h }{\partial x } (x^{(\tilde{N}+1)})\mu_{\tilde{N}}(x^{(\tilde{N}+1)}, u)  \\
<&L_h\beta_{\tilde{N}} \sqrt{\text{tr}\left(\mb{\Sigma}_{\tilde{N}}^2(x^{(\tilde{N}+1)},u)\right)}- \alpha(h(x^{(\tilde{N}+1)})) 
\end{align*}
holds for all $u\in\mathcal{U}$. Now, let
\begin{align*}
&u^{**}_{\tilde{N}+1} \triangleq  \\ &\arg\sup_{u \in \mathcal{U}} \frac{\partial h}{\partial x}\left(x^{(\tilde{N}+1)}\right)\left(f\left(x^{(\tilde{N}+1)}\right) + g\left(x^{(\tilde{N}+1)}\right)u\right),
\end{align*}
and note that, with probability at least $1-\delta$,
\begin{align}
\begin{split}
&\frac{\partial h }{\partial x } \left(x^{(\tilde{N}+1)}\right)\left(f\left(x^{(\tilde{N}+1)}\right) + g\left(x^{(\tilde{N}+1)}\right) u^{**}_{\tilde{N}+1} \right)\\
\leq&\frac{\partial h }{\partial x } \left(x^{(\tilde{N}+1)}\right)\mu_{\tilde{N}}\left(x^{(\tilde{N}+1)},u^{**}_{\tilde{N}+1} \right) \\&+ L_h\beta_{\tilde{N}} \sqrt{\text{tr}\left(\mb{\Sigma}_{\tilde{N}}^2\left(x^{(\tilde{N}+1)},u^{**}_{\tilde{N}+1} \right)\right)} \\
\leq&\frac{\partial h }{\partial x } \left(x^{(\tilde{N}+1)}\right)\mu_{\tilde{N}}\left(x^{(\tilde{N}+1)},u^{(\tilde{N}+1)}\right) \\&+ L_h\beta_{\tilde{N}} \sqrt{\text{tr}\left(\mb{\Sigma}_{\tilde{N}}^2\left(x^{(\tilde{N}+1)},u^{(\tilde{N}+1)}\right)\right)} .
   \end{split}
\end{align}

We then obtain
   \begin{align*}
        &- \alpha(h(x^{(\tilde{N}+1)})) \\
        >& \frac{\partial h }{\partial x } (x^{(\tilde{N}+1)}) \mu_{\tilde{N}}\left(x^{(\tilde{N}+1)},u^{(\tilde{N}+1)} \right) \\&- L_h\beta_{\tilde{N}} \sqrt{\text{tr}\left(\mb{\Sigma}_{\tilde{N}}^2(x^{(\tilde{N}+1)},u^{(\tilde{N}+1)})\right)} \\
        \geq &   \frac{\partial h }{\partial x } \left(x^{(\tilde{N}+1)}\right) \left(f\left(x^{(\tilde{N}+1)}\right) + g\left(x^{(\tilde{N}+1)}\right) u^{**}_{\tilde{N}+1} \right) \\
        &- 2L_h \beta_{\tilde{N}} \sqrt{\text{tr}\left(\mb{\Sigma}_{\tilde{N}}^2(x^{(\tilde{N}+1)},u^{(\tilde{N}+1)})\right)} \\
        \geq & -\alpha\left(h(x^{(\tilde{N}+1)})\right) +\epsilon  \\
        &- 2L_h\beta_{\tilde{N}}\sqrt{\text{tr}\left(\mb{\Sigma}_{\tilde{N}}^2(x^{(\tilde{N}+1)},u^{(\tilde{N}+1)})\right)}
    \end{align*}
i.e.,
\begin{align*}
    2L_h \beta_{\tilde{N}} \sqrt{\text{tr}\left(\mb{\Sigma}_{\tilde{N}}^2\left(x^{(\tilde{N}+1)},u^{(\tilde{N}+1})\right)\right)} \geq \epsilon.
\end{align*}
For simplicity of exposition, we introduce
\[\psi \triangleq \left( \sum_{i=1}^n \frac{4L_h} { \ln{\left(1 + \sigma_{i,\text{ns}}^{-2}\right)}} C_\beta C_i\right).\]
By summing up the posterior covariance terms over the collected data points and employing \Cref{lemma:growthofgammai,lemma:growth_of_betaN}, we then obtain
 \begin{align*}
        -\Delta N \epsilon  \geq &\sum_{q=1}^{\Delta N} - 2L_h \beta_{N+q}\sqrt{\text{tr}\left(\mb{\Sigma}_{N+q}^2(x^{({N}+q)},u^{(N+q)}_{{N}+q})\right)} \\
        \geq & - \sum_{q=1}^{\Delta N} \left(2L_h \beta_{N+q}\sum_{i=1}^n \frac{2} {\ln{\left(1 + \sigma_{i,\text{ns}}^{-2}\right)}} \gamma_{i,N+q} \right) \\
        \geq & - \psi \log(N+\Delta N)^{\frac{3}{2}(n+m+1)}.
        \end{align*}
    Hence,  \begin{align*}\exp\left(\left(\frac{\Delta N \epsilon}{\psi} \right)^{\frac{2}{3(n+m+1)}}\right) \leq N+\Delta N , 
        \end{align*}
holds for all $\Delta N>0$. Without loss of generality, we can substitute $\Delta N$ with $(\Delta N)^{\frac{3(n+m+1)}{2}}$, which yields
\begin{align*}&\exp\left(\Delta N \left(\frac{ \epsilon}{\psi} \right)^{\frac{2}{3(n+m+1)}}\right) \leq N+(\Delta N)^{\frac{3(n+m+1)}{2}} \\
\leq & (N+\Delta N)^{\frac{3(n+m+1)}{2}}, 
        \end{align*}
        i.e.,
        \begin{align*}&\exp\left(2 \Delta N \eta\right) 
\leq N+\Delta N.
        \end{align*}
    Since we have \begin{align*}
   \Delta N \geq & \eta^{-1}\left( \left(\log\left(N\right)- \log\left(\eta\right)\right)\right),
\end{align*}
this is a contradiction by  \Cref{lemma:exp_overtakes_linear}. By the same argument, %\Cref{alg:feasibility_recovering_algorithm_spatial} and 
\Cref{alg:feasibility_recovering_algorithm_temporal} collects at most $\Delta N$ points in the state space where $\Pi_{N+q}(x)$ is empty.

\end{proof}
\begin{remark}
    Although \Cref{alg:feasibility_recovering_algorithm_temporal} employs a temporal trigger to stipulate when exploration takes place, a trigger based on the distance to the boundary or the value of the CBF can also be employed, e.g., by dividing the CBF into $\Delta N$ segments and sampling whenever a new segment is reached.
\end{remark}
\section{Discussion} 

\subsection{Choice of $\Delta N$.}
Though the constants $\C_i$, $C_\beta$ and $\beta_i$ required by \Cref{thm:mainresult} can be computed explicitly \cite{Srinivas2012,chowdhury2017kernelized}, they can result in very conservative values for $\Delta N$. However, this is only a necessary criterion, as opposed to a sufficient one, and practice we can choose lower values of $\Delta N$. This is because  $\Delta N$ corresponds to the maximal amount of data that is required in order for $\Pi_{\tilde{N}}(x)$ to be non-empty for all $x \in \mathcal{C}$, and often only a subset of $\mathcal{C}$ is visited during control, i.e., it is sufficient for $\Pi_{\tilde{N}}(x)$ to be non-empty only for a subset of the safe set $\mathcal{C}$.

\subsection{Practicability of Online Exploration.}

A practical concern that may generally arise with control inputs $u$ that are geared towards exploring the state and input spaces is that the corresponding inputs may place a significant strain on the system and lead to undesirable behavior, e.g., if the input exhibits a high frequency and amplitude. However, computing the input by solving \eqref{eq:gp_ucb} corresponds to choosing the optimistically safest input under uncertainty. Hence it is reasonable to expect the corresponding input to be acceptable for the system, i.e., not damaging.

% \subsection{Spatial sampling criteria.}

\begin{figure}\includegraphics[width = 0.99\columnwidth]{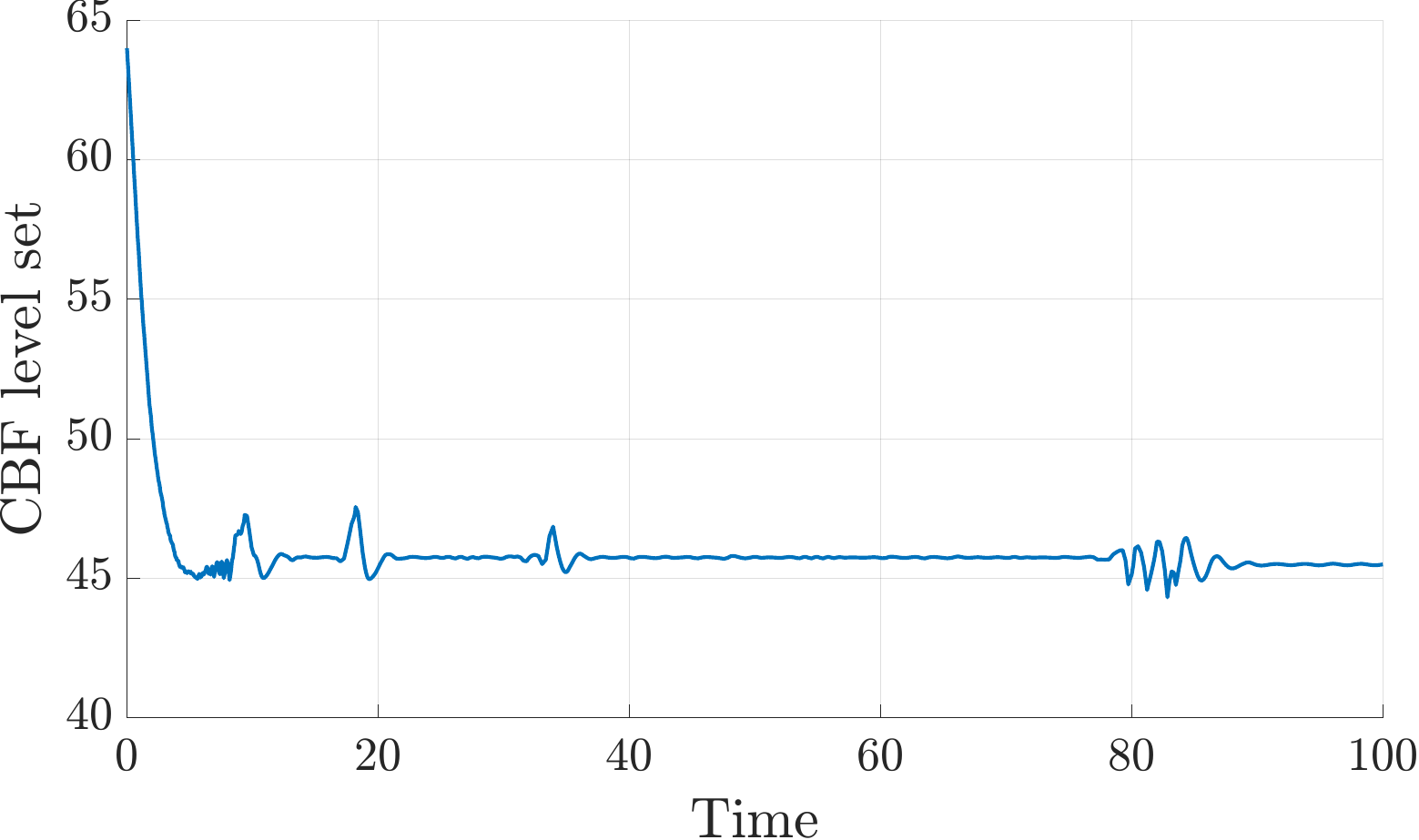}
\caption{Value of control barrier function $h(\mathbf{x})$ for cruise control example. Since no prior model is available for control, safety is obtained by efficiently learning a model online.} %The approach of \cite{jagtap2020control} is only applicable when the input $u$ is unbounded, and only performs similarly to our approach whenever $g$ is perfectly known. These are, however, not requirements of our approach.} 
\label{fig:simu_run1_nomodel}
\end{figure}

\begin{figure}
\includegraphics[width = 0.99\columnwidth]{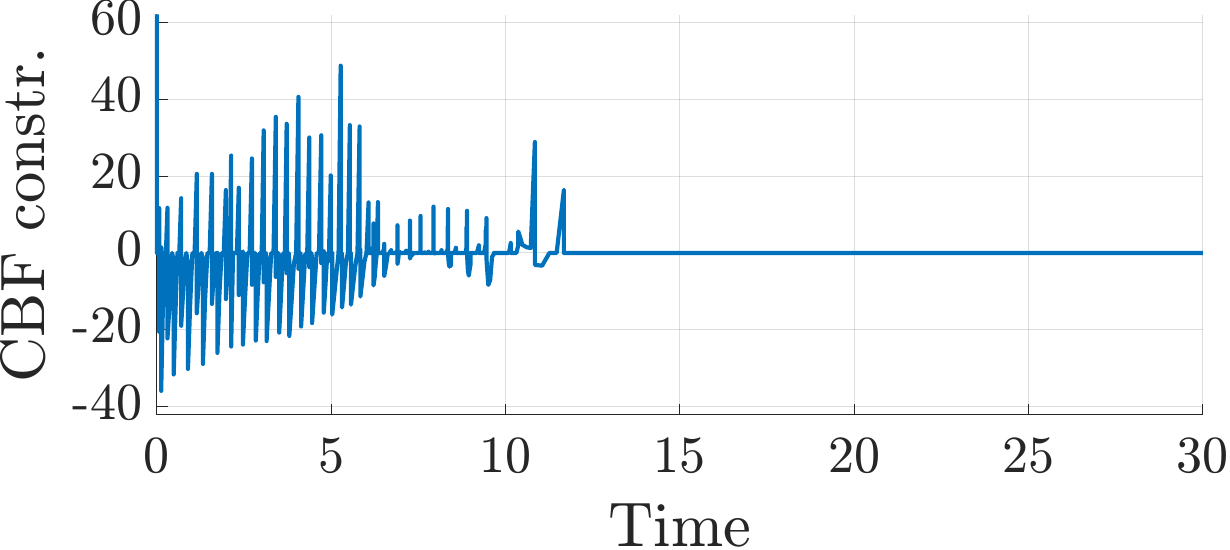}
\caption{Estimated (worst-case) time derivative of control barrier function for cruise control example. Spikes are due to training data set updates, leading to decreased model uncertainty. Positive values indicate infeasibility of the (GP-CBF-SOCP).} 
\label{fig:simu_run2_nomodel}
\end{figure}

\section{Numerical Validation}
\label{sect:numericalvalidatoin}

We now showcase how our approach performs using two numerical simulations. We start with a cruise control system, then present results for a quadrotor with ground dynamics, which is more complex. Note that, in the following, we assume to have either no prior model (cruise control example) or to know only the model component corresponding to the time-derivatives (quadrotor), which is insufficient to implement any state-of-the-art approach. We finish this section by comparing our approach to persistence of excitation. % We only show the results corresponding to \Cref{alg:feasibility_recovering_algorithm_temporal}, since the results are similar to those obtained with \Cref{alg:feasibility_recovering_algorithm_spatial}.

\subsection{Cruise Control}
\label{subsect:cruise_control}

Our approach is employed to learn the road vehicle model presented in \cite{castaneda2021pointwise} while simultaneously applying an adaptive cruise control system. In the following, we omit phyisical dimensions when describing the system model. The state space model is as in \eqref{eq:ol_dyn}, with unknown state-dependent functions
\begin{align}
\label{eq:state_space_model_cruise_ctrl}
    %\dot{x} = f(x)+g(x), \quad 
    f(x) = 
    \begin{bmatrix}
        -\frac{1}{m}(\zeta_0 + \zeta_1 v + \zeta_2 v^2 ) \\
        v_0 - v
    \end{bmatrix}, \quad g(x) = \begin{bmatrix}
        0 \\
        \frac{1}{m}
    \end{bmatrix} 
\end{align}
and state $x= [v \ z]^\top $. Here $z$ denotes the distance between the ego vehicle and the target vehicle in front, $v$ denotes the ego vehicle speed, $m = 1650$ its mass, and $\zeta_0=0.2$, $\zeta_1=10$, $\zeta_2 =0.5$ are parameters that specify the rolling resistance. As a control barrier function, we employ $h(\mathbf{x}) = z - T_h v$, where $T_h=1.8$, which aims to maintain a safe distance between the ego vehicle and the vehicle in front. The nominal controller $\pi_\textrm{nom}(x)$ used for the (GP-CBF-SOCP) is a P-controller $\pi_\textrm{nom} = -10 (v-v_d)$, where $v_d=24$ % \frac{m}{s}$ 
corresponds to the desired velocity. We assume to have $N=10$ data points at the start of the simulation, which we employ exclusively to learn the lengthscales and signal variances of the covariance kernels by minimizing the posterior likelihood \cite{Rasmussen2006}. Though the amount of data $\Delta N$ stipulated by \Cref{thm:mainresult} yields strict theoretical guarantees, this represents only a sufficient, and not a necessary condition, and can be conservative in practice. Hence, to additionally showcase the practical applicability of our approach, we set $\Delta t = 10^{-5}$, which corresponds to a high rate for many practical applications.

We simulate the system for $100$ seconds. The CBF value can be seen in Fig. \ref{fig:simu_run1_nomodel}, the worst-case estimated value of the CBF constraint is depicted in Fig. \ref{fig:simu_run2_nomodel}. The CBF value $h(x)$ is always above zero, meaning that safety is always kept at all times. This is to be expected from \Cref{thm:mainresult}. The (CBF-SOCP) is infeasible during many simulation instances, particularly in the beginning, which leads to a high rate of exploratory inputs, obtained by solving \eqref{eq:gp_ucb}. However, feasibility is recovered after approximately $6$ seconds, after which a safe input can be obtained without further exploration. Note that the boundary of the safe set is not reached. This is because the proposed algorithm stops exploring when a safe controller can be computed, i.e., the (GP-CBF-SOCP) becomes feasible. Although closer proximity to the boundary of the safe set can be potentially achieved through exploration, we leave this to future work.

\begin{figure}\includegraphics[width = 0.99\columnwidth]{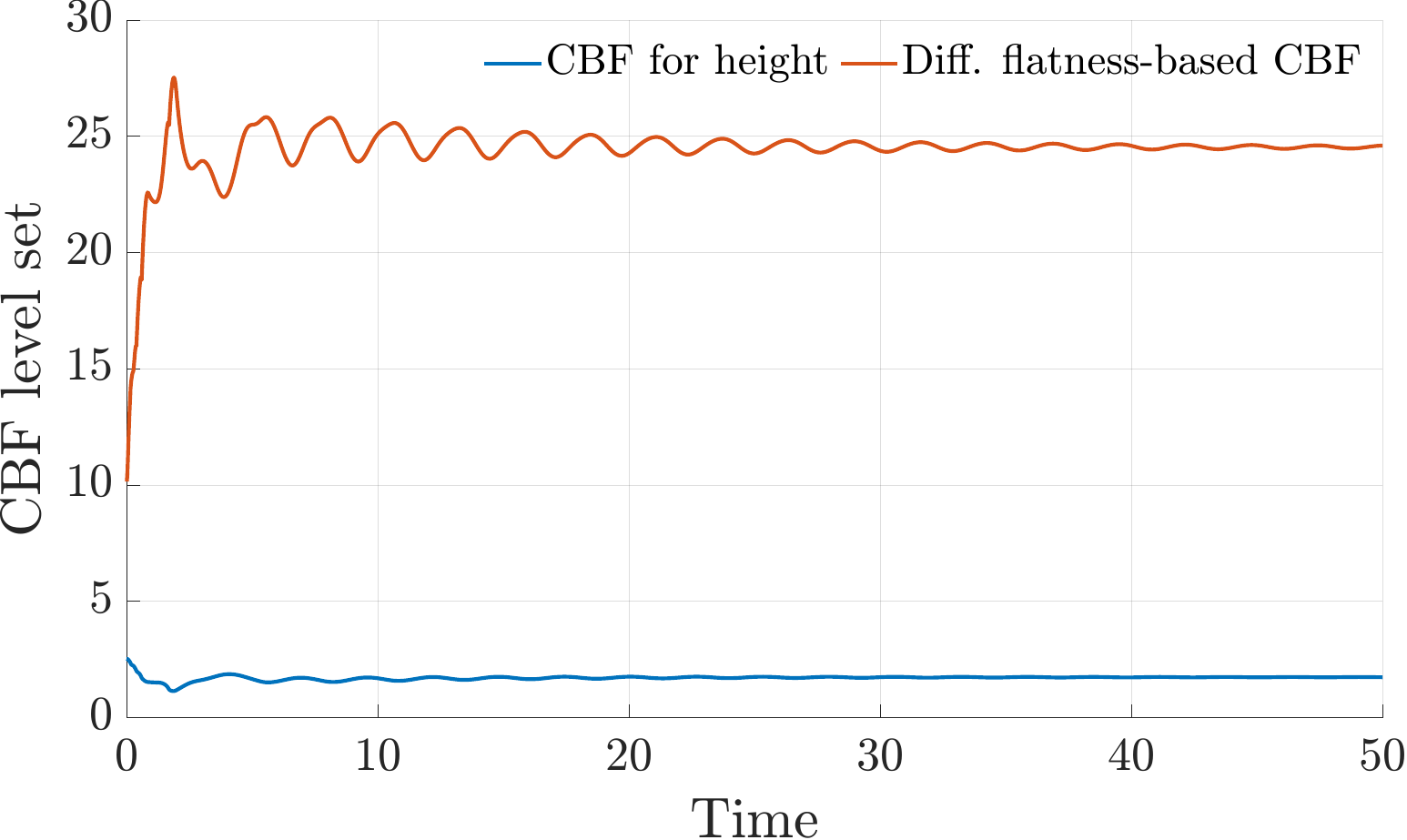}
\caption{Value of control barrier functions for quadrotor example..} %The approach of \cite{jagtap2020control} is only applicable when the input $u$ is unbounded, and only performs similarly to our approach whenever $g$ is perfectly known. These are, however, not requirements of our approach.} 
\label{fig:sim_run_quadrotor_val}
\end{figure}

\begin{figure}
\includegraphics[width = 0.99\columnwidth]{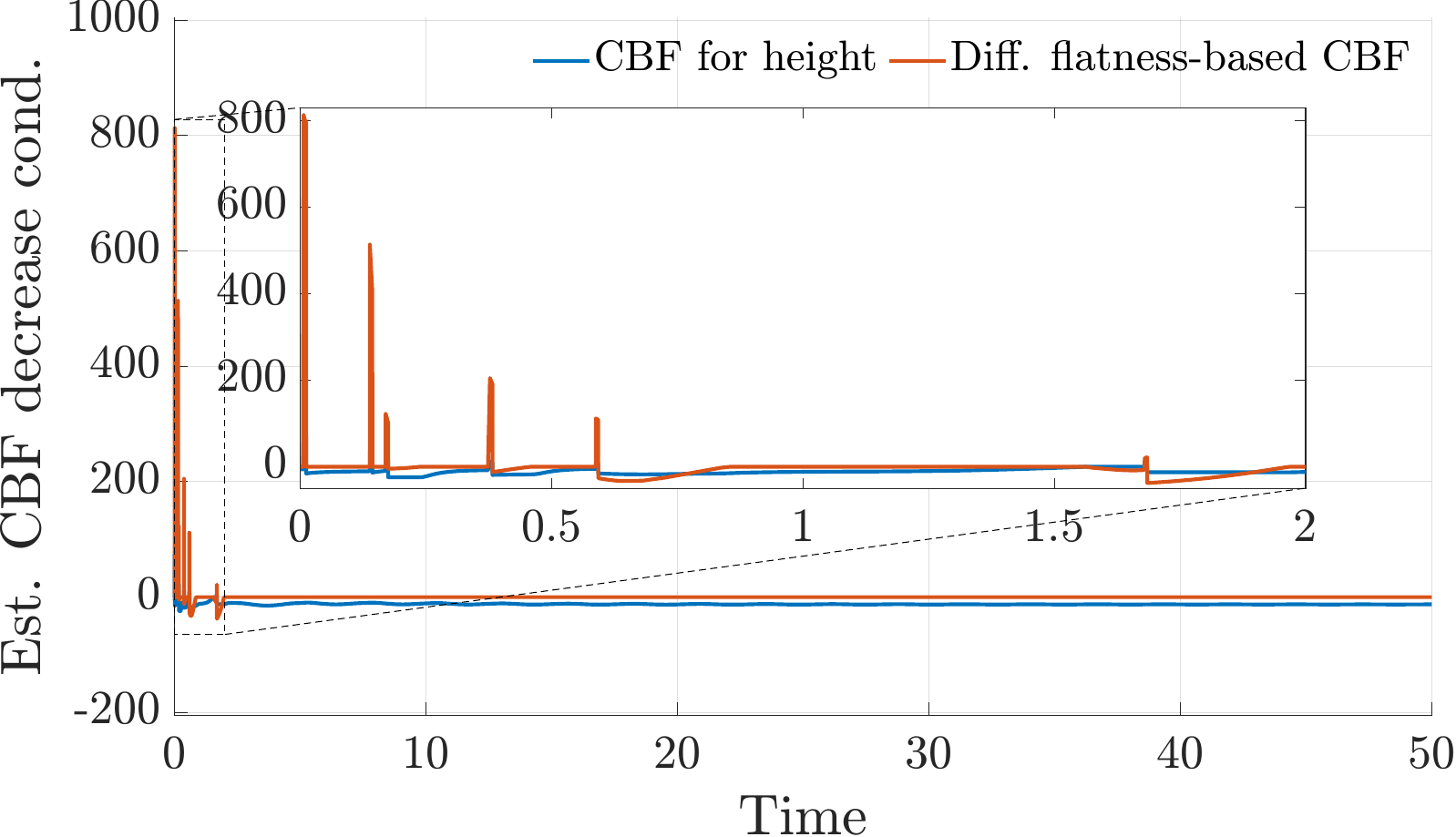}
\caption{Estimated (worst-case) time derivative of control barrier function for quadrotor example.} 
\label{fig:sim_run_quadrotor_worst_case}
\end{figure}

\begin{figure*}[t]
\centering
\begin{subfigure}[b]{0.49\textwidth}
\includegraphics[trim={0pt 0pt 0pt 0pt},clip,width = 0.9\columnwidth]{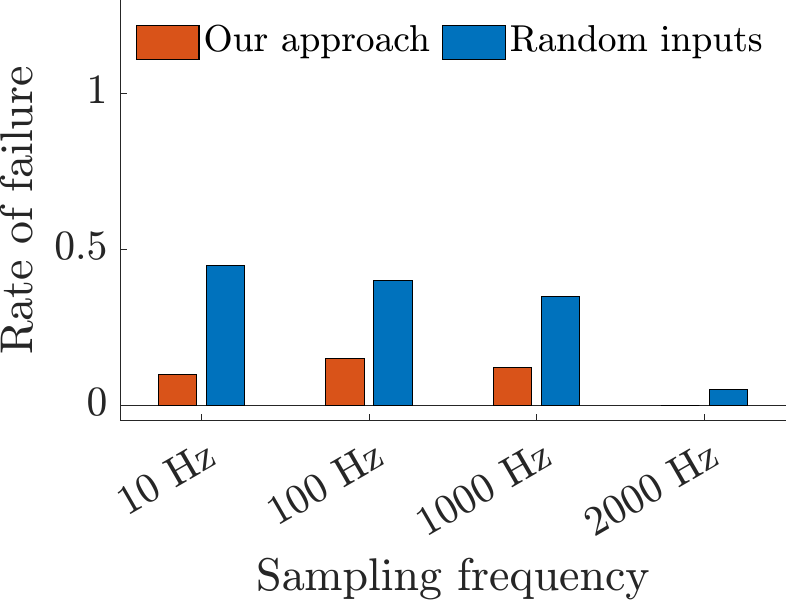}
\caption{Cruise control.} % \\\hspace{\textwidth}  } 
\label{fig:constr_satisf_uniform}
\end{subfigure}
\hfill
\begin{subfigure}[b]{0.49\textwidth}
\includegraphics[trim={0pt 0pt 0pt 0pt},clip,width = 0.9\columnwidth]{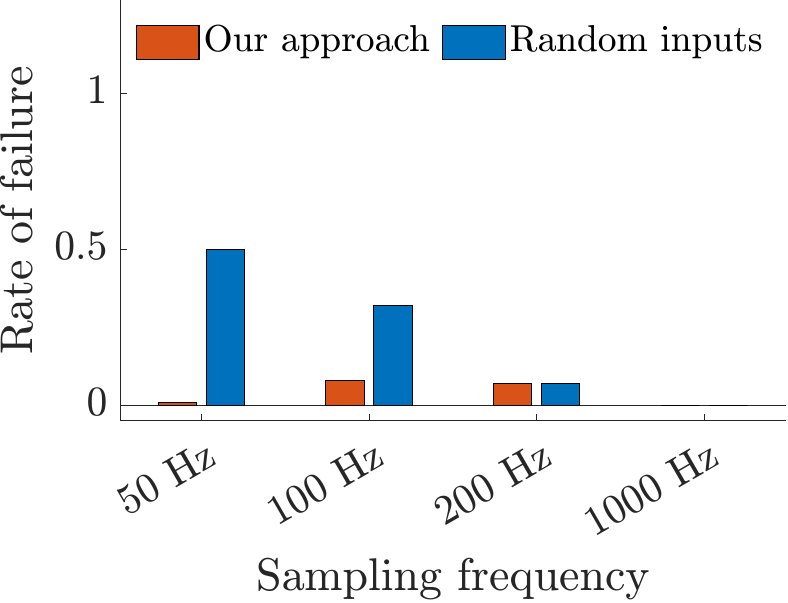}
\caption{Quadrotor.} 
\label{fig:cost_uniform}
\end{subfigure}
\caption{Rate of failure using our approach and random inputs for the cruise control (left) and quadrotor (right) settings.}
\label{fig:failures}
\end{figure*}

\subsection{Quadrotor}

We now showcase our approach using a numerical simulation of a quadrotor with ground dynamics. The quadrotor dynamics are specified by the functions
\begin{subequations}
\label{eq:state_space_model_quadrotor}
\begin{align}
    \dot{p} &= v, \quad & \dot{v} &= g_{\textup{gr}} e_z  + \zeta(p_z) R e_z T , \\ \dot{R} &= R [\omega]_\times 
    \end{align}
\end{subequations}
where $p\in\mathbb{R}^3$ is the global position, $v\in\mathbb{R}^3$ the global velocity, and $R\in \text{SO}(3)$ the system orientation. The parameter $g_\textup{gr} = 9.81$ denotes gravity, $[\; \cdot\; ]_\times: \mathbb{R}^3 \rightarrow \R^{3\times3}$ is the skew-symmetric mapping, and $e_z = [0 \ 0 \ 1]^\top$ is the unit-$z$ vector. The function $\zeta:\mathbb{R}_+ \rightarrow [0,1]$ models ground effects and takes the quadrotor height $p_z$ as an input variable. It is computed as \cite{danjun2015autonomous}
\begin{align}
    \zeta(p_z) = 1-\rho\left(\frac{r_{\text{rot}}}{4p_z}\right)^2,
\end{align}
where $\rho =  5$ and $r_{\text{rot}} = 0.09 $ is the rotor radious. The control inputs are $T$, the thrust acceleration, and $\omega \in  SO(3)$, the body-frame angle rates: 
\[
u= [T \ \omega ]^\top \in \R \times SO(3). 
\]
% consist of the 
% We reformulate the system \eqref{eq:state_space_model_quadrotor} using quaternions, which corresponds to a system with $d_x = 10$ dimensions and state
% \[
% x= 
%     \begin{bmatrix}
%         {p}_x &
%         {p}_y &
%         {p}_z & 
%         q_0&
%         q_1&
%         q_2&
%         q_3&
%         {v}_x &
%         {v}_y &
%         {v}_z 
% \end{bmatrix}^\top,
% \]
% where ${p}_x$, ${p}_y$,
%         ${p}_z$ and ${v}_x$, ${v}_y$,
%         ${v}_z$ and denotes positions and velocities in in the $x$, $y$ and $z$ coordinates, respectively. The variables $q_0$, ..., $q_3$ correspond to the body attitude rotation in quaternion form.

In this example, we employ two CBFs. The first is $h(\mathbf{x}) = 10(p_z - T_z v_z)$, where $T_z=0.1$, and is geared toward keeping altitude higher than zero. The second CBF utilizes the differential flatness of the quadrotor to synthesize a viable safe set. In particular we are concerned with restricting the system to safe positions defined as the 0-superlevel set of $h_p(p) = r^2 - \Vert p \Vert^2$. We extend $h_p$ to include velocities as $h_e(p,v) = \dot{h}_p(p,v) + \alpha h_p(p) $ for some $\alpha >0 $ to produce a relative degree 1 CBF for a double integrator system as in \cite{nguyen2016exponential}. To include orientation we add a rotation term to produce a CBF for the drone, $h(p,v,R) = h_e(p,v) - \lambda (1 - \frac{1}{2r}\frac{\partial h}{\partial p} R e_z) $, with some $\lambda \in (0, r^2/2)$ which shrinks the safe set to ensure that the thrust vector is pointing inwards whenever $h_e(p,v) = 0$. When computing the exploring input \eqref{eq:gp_ucb}, we alternate between CBFs. The nominal controller $\pi_\textrm{nom}(x)$ used for the (GP-CBF-SOCP) corresponds to a differentially flat controller, computed as in \cite{faessler2018differentially}, and we consider bounded thrust, with $\vert T\vert \leq 15000$. Note that $\omega \in SO(3)$ is already bounded. Similarly to the cruise control setting, we assume to have $N=10$ data points at the start of the simulation to learn the kernel hyperparameters and set $\Delta t = 10^{-5}$. 

We simulate the system for $50$ seconds. The CBF value can be seen in Fig. \ref{fig:sim_run_quadrotor_val}, the worst-case estimated value of the CBF constraint is depicted in Fig. \ref{fig:sim_run_quadrotor_worst_case}. Similarly to the cruise control case, instances, when the (GP-CBF-SOCP) is infeasible, are initially frequent, leading to a high data collection rate. After two seconds, feasibility is recovered, and safety is guaranteed by solving the (GP-CBF-SOCP).

\section{Comparison with Persistence of Excitation}
\Cref{thm:mainresult} states that it is \textit{sufficient} to collect data by applying \eqref{eq:gp_ucb} to the system in order to guarantee safety. However, other types of control inputs may also satisfy this requirement. In the following, we investigate how our approach performs compared to persistence of excitation. More specifically, we apply our method to the system with the following difference: instead of computing exploratory inputs by solving \eqref{eq:gp_ucb}, we sample the control inputs from a uniform distribution on $\mathcal{U}$, which corresponds to a persistently exciting signal \cite{anderson1985adaptive}. We again consider the adaptive cruise control and quadrotor settings and investigate how our approach and the random input-based one perform if the maximal sampling frequency $\frac{1}{\Delta t}$ is bounded. This is relevant, as many practical settings do not allow for arbitrarily high sampling frequencies.

We perform $100$ simulations with different initial conditions, uniformly sampled from a region within the safe set. We report how often each method fails, i.e., leads to a positive value for the CBF during the simulation. The average number of failures for both settings is shown in \Cref{fig:failures}. A non-zero failure rate is expected at low sampling frequencies since we cannot efficiently learn the system model if too little data is collected. However, as can be seen, our approach nonetheless performs better than the random control input-based approach. This is because the inputs applied to the system are geared towards recovering the feasibility of the (GP-CBF-SOCP), whereas random inputs are not. This also results in a higher data efficiency, as reflected in the average collected data, which is lower for our approach. At higher sampling rates, data efficiency is higher, leading to less collected data. This is because a higher sampling rate means that the elapsed time between data collection and model update is smaller, i.e., the model captures the true system more faithfully immediately after an update. This suggests that, if the maximal sampling frequency is low, then particular effort should be put into updating the GP as fast as possible.

\section{Conclusion}
\label{section:conclusion}

We have presented an online learning-based approach to recovering feasibility of a CBF-QP before reaching the boundary of a safe set, thus guaranteeing safety with high probability. This is achieved by leveraging tools commonly used in Bayesian optimization to devise an exploration approach that makes progress towards learning safe inputs. In future, we aim to apply the proposed approach to real-life and more complex systems. % and compare it to other state-of-the-art safe learning-based approaches.

\bibliographystyle{IEEEtran}
\bibliography{references,cosner_main}

% Generated by IEEEtran.bst, version: 1.14 (2015/08/26)
\begin{thebibliography}{10}
\providecommand{\url}[1]{#1}
\csname url@samestyle\endcsname
\providecommand{\newblock}{\relax}
\providecommand{\bibinfo}[2]{#2}
\providecommand{\BIBentrySTDinterwordspacing}{\spaceskip=0pt\relax}
\providecommand{\BIBentryALTinterwordstretchfactor}{4}
\providecommand{\BIBentryALTinterwordspacing}{\spaceskip=\fontdimen2\font plus
\BIBentryALTinterwordstretchfactor\fontdimen3\font minus
  \fontdimen4\font\relax}
\providecommand{\BIBforeignlanguage}[2]{{%
\expandafter\ifx\csname l@#1\endcsname\relax
\typeout{** WARNING: IEEEtran.bst: No hyphenation pattern has been}%
\typeout{** loaded for the language `#1'. Using the pattern for}%
\typeout{** the default language instead.}%
\else
\language=\csname l@#1\endcsname
\fi
#2}}
\providecommand{\BIBdecl}{\relax}
\BIBdecl

\bibitem{ames_control_2017}
\BIBentryALTinterwordspacing
A.~D. Ames, X.~Xu, J.~W. Grizzle, and P.~Tabuada,
  ``\BIBforeignlanguage{en}{Control {Barrier} {Function} {Based} {Quadratic}
  {Programs} for {Safety} {Critical} {Systems}},''
  \emph{\BIBforeignlanguage{en}{IEEE Transactions on Automatic Control}},
  vol.~62, no.~8, pp. 3861--3876, Aug. 2017, arXiv: 1609.06408. [Online].
  Available: \url{http://arxiv.org/abs/1609.06408}
\BIBentrySTDinterwordspacing

\bibitem{wabersich_linear_2018}
K.~P. Wabersich and M.~N. Zeilinger, ``Linear {Model} {Predictive} {Safety}
  {Certification} for {Learning}-{Based} {Control},'' in \emph{2018 {IEEE}
  {Conference} on {Decision} and {Control} ({CDC})}, Dec. 2018, pp. 7130--7135,
  iSSN: 2576-2370.

\bibitem{bansal2017hamilton}
S.~Bansal, M.~Chen, S.~Herbert, and C.~J. Tomlin, ``Hamilton-jacobi
  reachability: A brief overview and recent advances,'' in \emph{2017 IEEE 56th
  Annual Conference on Decision and Control (CDC)}.\hskip 1em plus 0.5em minus
  0.4em\relax IEEE, 2017, pp. 2242--2253.

\bibitem{taylor_learning_nodate}
A.~J. Taylor, A.~Singletary, Y.~Yue, and A.~D. Ames,
  ``\BIBforeignlanguage{en}{Learning for {Safety}-{Critical} {Control} with
  {Control} {Barrier} {Functions}},'' p.~10.

\bibitem{csomay-shanklin_episodic_nodate}
N.~Csomay-Shanklin, R.~K. Cosner, M.~Dai, A.~J. Taylor, and A.~D. Ames,
  ``\BIBforeignlanguage{en}{Episodic {Learning} for {Safe} {Bipedal}
  {Locomotion} with {Control} {Barrier} {Functions} and {Projection}-to-{State}
  {Safety}},'' p.~13.

\bibitem{lederer2022safe}
A.~Lederer, A.~Begzadi{\'c}, N.~Das, and S.~Hirche, ``Safe learning-based
  control of elastic joint robots via control barrier functions,'' \emph{arXiv
  preprint arXiv:2212.00478}, 2022.

\bibitem{rodriguez2021learning}
I.~D.~J. Rodriguez, U.~Rosolia, A.~D. Ames, and Y.~Yue, ``Learning unstable
  dynamics with one minute of data: A differentiation-based gaussian process
  approach,'' \emph{arXiv preprint}, 2021.

\bibitem{hewing2020learning}
L.~Hewing, K.~P. Wabersich, M.~Menner, and M.~N. Zeilinger, ``Learning-based
  model predictive control: Toward safe learning in control,'' \emph{Annual
  Review of Control, Robotics, and Autonomous Systems}, vol.~3, pp. 269--296,
  2020.

\bibitem{koller2018learning}
T.~Koller, F.~Berkenkamp, M.~Turchetta, and A.~Krause, ``Learning-based model
  predictive control for safe exploration,'' in \emph{IEEE Conference on
  Decision and Control}, 2018, pp. 6059--6066.

\bibitem{akametalu2015reachability}
A.~K. Akametalu, J.~F. Fisac, J.~H. Gillula, S.~Kaynama, M.~N. Zeilinger, and
  C.~J. Tomlin, ``Reachability-based safe learning with gaussian processes,''
  in \emph{53rd IEEE Conference on Decision and Control}, 2014, pp. 1424--1431.

\bibitem{castaneda2021pointwise}
F.~Castañeda, J.~J. Choi, B.~Zhang, C.~J. Tomlin, and K.~Sreenath, ``Pointwise
  feasibility of gaussian process-based safety-critical control under model
  uncertainty,'' in \emph{2021 60th IEEE Conference on Decision and Control
  (CDC)}, 2021, pp. 6762--6769.

\bibitem{jagtap2020control}
P.~Jagtap, G.~J. Pappas, and M.~Zamani, ``Control barrier functions for unknown
  nonlinear systems using gaussian processes,'' in \emph{2020 59th IEEE
  Conference on Decision and Control (CDC)}.\hskip 1em plus 0.5em minus
  0.4em\relax IEEE, 2020, pp. 3699--3704.

\bibitem{gurriet2018towards}
T.~Gurriet, A.~Singletary, J.~Reher, L.~Ciarletta, E.~Feron, and A.~Ames,
  ``Towards a framework for realizable safety critical control through active
  set invariance,'' in \emph{2018 ACM/IEEE 9th International Conference on
  Cyber-Physical Systems (ICCPS)}.\hskip 1em plus 0.5em minus 0.4em\relax IEEE,
  2018, pp. 98--106.

\bibitem{Rasmussen2006}
C.~E. Rasmussen and C.~K. Williams, ``{G}aussian processes for machine
  learning. 2006,'' \emph{The MIT Press, Cambridge, MA, USA}, 2006.

\bibitem{micchelli2006universal}
C.~A. Micchelli, Y.~Xu, and H.~Zhang, ``Universal kernels,'' \emph{Journal of
  Machine Learning Research}, vol.~7, no. Dec, pp. 2651--2667, 2006.

\bibitem{Capone2019BacksteppingFP}
A.~Capone and S.~Hirche, ``Backstepping for partially unknown nonlinear systems
  using {G}aussian processes,'' \emph{IEEE Control Systems Letters}, vol.~3,
  pp. 416--421, 2019.

\bibitem{Srinivas2012}
N.~Srinivas, A.~Krause, S.~M. Kakade, and M.~W. Seeger, ``Information-theoretic
  regret bounds for {G}aussian process optimization in the bandit setting,''
  \emph{IEEE Transactions on Information Theory}, vol.~58, no.~5, pp.
  3250--3265, 2012.

\bibitem{krause2011contextual}
\BIBentryALTinterwordspacing
A.~Krause and C.~Ong, ``Contextual gaussian process bandit optimization,'' in
  \emph{Advances in Neural Information Processing Systems}, J.~Shawe-Taylor,
  R.~Zemel, P.~Bartlett, F.~Pereira, and K.~Weinberger, Eds., vol.~24.\hskip
  1em plus 0.5em minus 0.4em\relax Curran Associates, Inc., 2011. [Online].
  Available:
  \url{https://proceedings.neurips.cc/paper/2011/file/f3f1b7fc5a8779a9e618e1f23a7b7860-Paper.pdf}
\BIBentrySTDinterwordspacing

\bibitem{chowdhury2017kernelized}
S.~R. Chowdhury and A.~Gopalan, ``On kernelized multi-armed bandits,''
  \emph{arXiv preprint arXiv:1704.00445}, 2017.

\bibitem{danjun2015autonomous}
L.~Danjun, Z.~Yan, S.~Zongying, and L.~Geng, ``Autonomous landing of quadrotor
  based on ground effect modelling,'' in \emph{2015 34th Chinese Control
  Conference (CCC)}, 2015, pp. 5647--5652.

\bibitem{nguyen2016exponential}
Q.~Nguyen and K.~Sreenath, ``Exponential control barrier functions for
  enforcing high relative-degree safety-critical constraints,'' in \emph{2016
  American Control Conference (ACC)}.\hskip 1em plus 0.5em minus 0.4em\relax
  IEEE, 2016, pp. 322--328.

\bibitem{faessler2018differentially}
M.~Faessler, A.~Franchi, and D.~Scaramuzza, ``Differential flatness of
  quadrotor dynamics subject to rotor drag for accurate tracking of high-speed
  trajectories,'' \emph{IEEE Robotics and Automation Letters}, vol.~3, no.~2,
  pp. 620--626, 2018.

\bibitem{anderson1985adaptive}
\BIBentryALTinterwordspacing
B.~D. Anderson, ``Adaptive systems, lack of persistency of excitation and
  bursting phenomena,'' \emph{Automatica}, vol.~21, no.~3, pp. 247--258, 1985.
  [Online]. Available:
  \url{https://www.sciencedirect.com/science/article/pii/0005109885900585}
\BIBentrySTDinterwordspacing

\end{thebibliography}

\end{document}